\documentclass[11pt]{article}
\usepackage{amsfonts,amsmath,amssymb,amsthm}
\usepackage{mathtools}
\usepackage{derivative}
\usepackage{fullpage}
\usepackage[colorlinks]{hyperref}
\usepackage[nameinlink,capitalise]{cleveref}
\usepackage{xcolor}

\hypersetup{
  linkcolor=[rgb]{0,0,0.6},
  citecolor=[rgb]{0, 0.4, 0},
  urlcolor=[rgb]{0.6, 0, 0}
}

\tolerance=9000 
\hyphenation{pseudo-randomness pseudo-random}

\newif\ifmynotes
\mynotestrue

\newcommand*\samethanks[1][\value{footnote}]{\footnotemark[#1]}

\theoremstyle{definition}
\newtheorem{thm}{Theorem}[section]
\newtheorem{defn}[thm]{Definition}
\newtheorem{lem}[thm]{Lemma}
\newtheorem{cor}[thm]{Corollary}

\newtheorem{question}[thm]{Question}
\newtheorem{fact}[thm]{Fact}
\newtheorem{construction}[thm]{Construction}
\theoremstyle{remark}

\newcommand{\NN}{\mathbb{N}}

\newcommand{\FF}{\mathbb{F}}
\newcommand{\KK}{\mathbb{K}}
\newcommand{\Res}[1]{\mathrm{Res}\left(#1\right)}
\newcommand{\coeff}[1]{\mathrm{coeff}\left(#1\right)}
\renewcommand{\epsilon}{\varepsilon}
\newcommand{\eps}{\varepsilon}

\title{Optimal Pseudorandom Generators for Low-Degree Polynomials Over Moderately Large Fields}
\author{Ashish Dwivedi\thanks{
Department of Computer Science and Engineering, The Ohio State University. Emails: \texttt{ashish02dwivedi@gmail.com, zguotcs@gmail.com}.
}
\and
Zeyu Guo\samethanks[1]
\and
Ben Lee Volk\thanks{Efi Arazi School of Computer Science, Reichman University, Israel. Email: \texttt{benleevolk@gmail.com}. The research leading to these results has received funding from the Israel Science Foundation (grant number 843/23).}
}
\date{}

\begin{document}

\maketitle

\begin{abstract}
We construct explicit pseudorandom generators that fool $n$-variate polynomials of degree at most $d$ over a finite field $\mathbb{F}_q$. The seed length of our generators is $O(d \log n + \log q)$, over fields of size exponential in $d$ and characteristic at least $d(d-1)+1$. Previous constructions such as Bogdanov's (STOC 2005) and Derksen and Viola's (FOCS 2022) had either suboptimal seed length or required the field size to depend on $n$.

Our approach follows Bogdanov's paradigm while incorporating techniques from Lecerf's factorization algorithm (J. Symb. Comput. 2007) and insights from the construction of Derksen and Viola regarding the role of indecomposability of polynomials.
\end{abstract}

\section{Introduction}
\label{sec:intro}
The role of randomness in efficient computation is one of the central topics in complexity theory: random bits are useful for designing algorithms, but producing random bits comes at a cost and it is often desirable to reduce them or eliminate them altogether.
One of the simplest yet most profound insights in this area is that efficient algorithms are, by definition, computationally limited, and cannot perform arbitrary statistical tests over their random bits. Therefore, one may hope to construct \emph{pseudorandom} distributions that use less random bits but are able to ``fool'' some limited classes of tests, that cannot distinguish between them and between truly random bits.

For the pseudorandom distributions to be useful, they need to be efficiently computable themselves. This is usually modeled as a \emph{pseudorandom generator} (PRG, for short). A PRG for a class of $\mathcal{C}$ is an efficiently computable function $G : S \to B$ such that for every function $f \in \mathcal{C}$, the distributions $f(\mathbf{U}_B)$ and $f(G(\mathbf{U}_S))$ are close in statistical distance, where $\mathbf{U}_A$ denotes the uniform distribution over the set $A$. Namely, the two experiments of applying $f(\cdot)$ to a uniformly random element of $B$, and applying $f(G(\cdot))$ to a uniformly random element of $S$, give roughly the same results. For this to be useful and non-trivial, obviously the set $S$ needs to be significantly smaller than $B$. The quantity $\log |S|$ is called the \emph{seed length} of the generator.

There has been a significant amount of work on constructing pseudorandom generators for various types of restricted distinguishers. In its most general form, where the distinguisher is allowed to be an arbitrary efficient (even non-uniform) algorithm, constructing such PRGs would imply breakthrough lower bounds in complexity theory. However, there are also unconditional constructions of PRGs for distinguishers coming from certain smaller complexity classes (see, for example, the surveys \cite{Vadhan12, HH23}).

In this paper, we focus on pseudorandom generators in the algebraic setting. Here, the restriction on the distinguishers is of algebraic nature: we seek to fool distinguishers that are low-degree $n$-variate polynomials over finite fields.

The problem of fooling low-degree polynomials is well-studied. The most basic case is polynomials of degree one, i.e., fooling linear functions. Such generators are also known as \emph{$\varepsilon$-biased sets}, and this problem was traditionally studied over $\mathbb{F}_2$, although some of the constructions can be generalized to larger fields. This concept was first defined and considered by Naor and Naor \cite{NN93}, with various improved constructions given by \cite{AGHP92, EGLNV98, BT13}, culminating in a recent nearly-optimal construction by Ta-Shma \cite{Ta-Shma17}. The seed length in those constructions is $O(\log n + \log q + \log(1/\eps))$, where $n$ denotes the number of variables, $\varepsilon$ the error of the PRG, and $q$ the field size.

While focusing on polynomials of degree one might seem a bit too restrictive, $\varepsilon$-biased sets have found numerous applications throughout the field of pseudorandomness and derandomization, and in the theory of computation in general.

One example relevant to this work is that $\varepsilon$-biased sets are in fact a basic building block in a construction of PRGs for higher-degree polynomials, using a paradigm initiated by Bogdanov and Viola \cite{BV10}. They suggested constructing a generator for degree-$d$ polynomials by summing up $\ell=\ell(d)$ independent copies of a generator for degree-one polynomials. The paper \cite{BV10} proved a conditional result when the number of summands is $d$, assuming certain additive combinatorics conjectures. Lovett \cite{Lovett09} showed how to prove an unconditional result at the cost of making the number of summands $2^d$. Finally, Viola \cite{Viola09} showed (unconditionally) that in fact $d$ summands suffice. The seed length in his construction is $O(d \log n + d 2^d \log(q/\eps))$. Indeed, even though the construction only sums $d$ copies of a generator for degree-one polynomials, for the analysis to go through, the error of this generator needs to be as small as $\varepsilon^{2^d}$ (for the final error of the generator for degree-$d$ polynomials to be $\varepsilon$), which incurs a factor of $2^d$ in the final seed length. Improving this generator and in particular obtaining meaningful results for polynomials of degree greater than $\log n$ is an extremely important open problem in complexity theory. One reason is that such pseudorandom generators will yield pseudorandom generators for small constant-depth circuits with parity gates, since Razborov \cite{Razborov87} and Smolensky \cite{Smolensky93} famously proved that functions computed by such circuits are approximated by low-degree polynomials.

All the constructions mentioned above work for any field. There are, however, better results when the field size $q$ is assumed to be large (typically, at least polynomially large in $d$ and $1/\varepsilon$). This assumption is useful since it allows one to use powerful tools from algebraic geometry, such as Weil-type estimates \cite{Wei49} on the number of points of varieties over finite fields.

This line of work was initiated by Bogdanov \cite{Bogdanov05}, who showed how to use different pseudorandom objects called \emph{hitting set generators} for low-degree polynomials in order to construct pseudorandom generators. Bogdanov's work, followed by the later improved constructions of hitting set generators \cite{KS01, Lu12, CT13, GX14}, resulted in a PRG with seed length $O(d^4 \log n + \log q)$ assuming $q \ge C d^6/\varepsilon^2$ for a sufficiently large constant $C$.\footnote{One should note that since $q$ is polynomially large in $1/\varepsilon$, the seed length also implicitly depends on $\log (1/\varepsilon)$ through the $\log q$ term.} We expand more on Bogdanov's technique in Section \cref{sec:technique}, as it is very relevant to this work. 

More recently, Derksen and Viola \cite{DV22} introduced fundamentally new techniques for this problem, with tools coming from invariant theory. One of their key ideas is to construct a low-degree polynomial map on a few variables that preserves the \emph{indecomposability} property of a polynomial $f$ when composed with it (we refer to \cref{sec:prelim} for more on that). 
Using this new tool in conjunction with other techniques, they are able to construct generators with seed length $O(d \log (dn) + \log q)$, assuming $q \ge C d^4 n^{\delta}/\varepsilon^2$ (for some large constant $C$ and small constant $\delta$), or seed length $O(d \log n \log (d\log n) + \log q)$ for $q \ge C (d \log n)^4/\varepsilon^2$. One should note, however, that the optimal parameters in the construction of \cite{DV22} are obtained after composing their construction with Bogdanov's original construction.

A related natural question is how small the seed length can potentially be. Alon, Ben-Eliezer and Krivelevich \cite{ABK08} considered this question and proved a lower bound of $\Omega(d \log(n/d) + \log q + \log(1/\eps))$ on the seed length. Thus, we see that the explicit constructions of \cite{DV22} come very close to the optimal bound. 
However, unlike the result of Bogdanov \cite{Bogdanov05}, in the construction of Derksen and Viola \cite{DV22} the minimum field size depends on the number of variables $n$. 

\subsection{Our Results}
\label{sec:results}

In this paper, we provide an improved construction of PRGs for low-degree polynomials, with an even shorter seed length, assuming the field size is exponentially large in $d$ (but independent of $n$).

\begin{thm}
\label{thm:main}
Let $\mathbb{F}_q$ be a finite field of characteristic at least $d(d-1)+1$ and size $q \ge C (d2^d/\eps+d^4/\eps^2)$ (for some sufficiently large absolute constant $C$). Then,
there exists an explicit pseudorandom generator that fools $n$-variate polynomials of degree at most $d$ over $\mathbb{F}_q$ with error $\epsilon$ and seed length $O(d\log n + \log q)$.
\end{thm}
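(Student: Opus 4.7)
My plan is to follow Bogdanov's paradigm of bootstrapping a hitting-set generator (HSG) for low-degree polynomials into a pseudorandom generator by adjoining a single additional pseudorandom field element. The candidate generator takes the form
\[
G(s_1, \ldots, s_k, t) \;=\; \sum_{i=1}^{k} t^{i-1}\, H(s_i),
\]
where $H : \{0,1\}^{O(\log n)} \to \mathbb{F}_q^{n}$ is an HSG for polynomials whose degree is a suitable polynomial in $d$, $k = O(d)$, and $t \in \mathbb{F}_q$ is uniform. The crucial design choice is that only a \emph{single} field element $t$ is used (rather than $\Omega(d)$ independent ones), so $\log q$ enters the seed length only once, yielding total seed length $O(d \log n + \log q)$.

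For the analysis, I would fix a test polynomial $f$ of degree $d$ and, for each HSG seed $s$, consider the univariate polynomial $f_s(t) := f(G(s,t)) \in \mathbb{F}_q[t]$ of degree at most $(k-1)d$. Weil-type estimates show that the distribution of $f_s(t)$ over uniform $t$ is $O(d^2/\sqrt{q})$-close to uniform on $\mathbb{F}_q$ provided that, for essentially every target value $c \in \mathbb{F}_q$, the shifted polynomial $f_s(t) - c$ is absolutely irreducible, or more flexibly has the bounded-factor structure captured by the Derksen--Viola notion of \emph{indecomposability}. The task thereby reduces to showing that this structural property is preserved, with high probability, under the random substitution induced by $H(s_1), \ldots, H(s_k)$.

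To establish this preservation I plan to invoke Lecerf's analysis of multivariate factorization under generic specialization: his algorithm exhibits an explicit ``bad locus,'' cut out by a polynomial of degree polynomial in $d$, whose non-vanishing at $s$ certifies that the absolutely irreducible factorization of $f$ descends intact to $f_s$. The assumed characteristic bound $p \ge d(d-1)+1$ is precisely the Bertini-type hypothesis needed for Lecerf's specialization argument to behave well in positive characteristic. Because the bad locus has degree only polynomial in $d$, the HSG $H$ avoids it with high probability, so for most seeds $s$ the polynomial $f_s$ inherits the indecomposability structure of $f$. A preliminary reduction to the case where $f$ itself is indecomposable, in the Derksen--Viola sense, should account for the $d\cdot 2^d/\eps$ term in the field-size hypothesis (via a union bound over the possible indecomposable components), after which the Weil estimate delivers the remaining $d^4/\eps^2$-type contribution.

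The main obstacle I anticipate is making the indecomposability \emph{uniform} over the parameter $c$: Lecerf's theorem directly controls the specialization of $f$ itself, whereas the Weil bound needs analogous control on the one-parameter family $\{f - c\}_{c \in \mathbb{F}_q}$. I expect the Derksen--Viola framework to supply the right parametric notion of indecomposability together with a quantitative, union-bound-friendly statement suitable for this family; combined with Lecerf's effective discriminant, the total ``bad'' locus should again have degree polynomial in $d$, which is the reason that an HSG with seed length $O(d \log n)$ independent of $q$ suffices. Assembling these ingredients and optimizing constants then yields the advertised error $\eps$ under the hypotheses of \cref{thm:main}.
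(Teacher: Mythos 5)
There are two genuine gaps here, and the first is fatal to the proposed construction. Your generator spends only \emph{one} fresh field element $t$, so for each fixed seed $s$ the test polynomial restricts to a \emph{univariate} polynomial $f_s(t)$ of degree $\ge 2$ in $t$, and the value distribution of a univariate polynomial map of degree $\ge 2$ is in general at \emph{constant} statistical distance from uniform (a constant fraction of values of $\FF_q$ typically have zero or at least two preimages), no matter how large $q$ is. The Weil-type equidistribution you invoke has no purchase in this setting: ``$f_s(t)-c$ absolutely irreducible'' is vacuous for a univariate polynomial of degree $\ge 2$ (it always splits into linear factors over $\overline{\FF}_q$), and the Weil bound gives $q\pm O(d^2\sqrt q)$ points only for \emph{curves}, i.e., level sets of a bivariate restriction. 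This is exactly why Bogdanov's paradigm, and the paper's \cref{construction:prg}, restrict $f$ to a random \emph{plane} and spend two fresh field elements $(u,v)$: then \cref{lem:equidistributed} applies to the bivariate restriction $F(u,v)$ and yields distance $O(d^2/\sqrt q)$. Relatedly, your seed accounting does not close: an HSG with seed length $O(\log n)$ for polynomials of degree $\mathrm{poly}(d)$ does not exist (a hitting set for degree-$D$ polynomials needs size $n^{\Omega(D)}$), and with $k=O(d)$ independent HSG seeds of the correct length $O(d\log n)$ each you would get $O(d^2\log n)$, not $O(d\log n)$.

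The second gap is that, even after switching to a plane restriction, invoking Lecerf's bad locus as ``a polynomial of degree polynomial in $d$'' only gives the degree bound $(d-1)(2d-1)=\Theta(d^2)$, which Lecerf shows is tight; hitting a degree-$\Theta(d^2)$ polynomial forces HSG seed length $\Theta(d^2\log n)$ and merely reproduces Bogdanov's parameters. The paper's key new ingredient, which your proposal does not contain, is \cref{lem:Bertinian}: using the $0/1$ structure of the solution space of Lecerf's linear system $D_{\mathbf{z},\sigma}$, the Bertinian bad points can be covered by $2^{d-1}-1$ polynomials of degree only $2d-1$, one for each candidate factorization pattern $S\subsetneq[d]$; one then union-bounds over these $2^{d-1}-1$ events, which is what forces $\delta\approx 2^{-d}$ and hence the $d2^d/\eps$ term in the field size. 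Your explanation of that term (a union bound over ``indecomposable components'' of $f$) is not where it comes from --- the reduction to indecomposable $f$ via $f=g(h)$ costs nothing --- and the parametric issue you flag (uniformity over the shifts $f-c$) is resolved in the paper not by a quantitative Derksen--Viola statement but by working with $f-t$ over $\overline{\FF_q(t)}$ via \cref{lem:indec2irred} and \cref{fact:extension}, after enforcing Hypothesis~(H) with a derandomized linear change of variables (\cref{cor:goods}), a preprocessing step your outline omits entirely.
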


For convenience, we summarize the comparison between  \cref{thm:main} and the results of Bogdanov \cite{Bogdanov05}, Viola \cite{Viola09} and Derksen and Viola \cite{DV22} in the following table. All the entries in this table are given up to some constant factors, but for ease of readability, we omit $O(\cdot)$ notations. 

\begin{center}
\begin{tabular}{l|c|c}
 & \textbf{Seed Length} & \textbf{Field Size} \\ \hline
 \cite{Viola09} & $d \log n + d \cdot 2^d \log (q/\varepsilon)$ & Every $q \ge 2$ \\
 \cite{Bogdanov05}+\cite{GX14} & $d^4 \log n + \log q$ & $d^6/\varepsilon^2$ \\ 
 \cite{DV22} & $d\log(dn) + \log q$ & $d^4n^{0.001}/\varepsilon^2$ \\  
\cite{DV22} & $d \log n \cdot \log(d \log n) + \log q$ & $(d \log n)^4 / \varepsilon^2$ \\  
 This paper: & $d \log n + \log q$ & $d2^d/\eps+d^4/\eps^2$    
\end{tabular}
\end{center}

We also prove that, if we only want to fool polynomials of \emph{prime} degree up to $d$, then the required field size in \cref{thm:main} can be improved to $O(d^4/\eps^2)$, avoiding an exponential dependence on $d$.

\begin{thm}\label{thm:prime-degree-intro}
Let $\mathbb{F}_q$ be a finite field of characteristic at least $d(d-1)+1$ and size $q \ge C (d^4/\eps^2)$ (for some sufficiently large absolute constant $C$). Then, there exists an explicit pseudorandom generator that fools $n$-variate polynomials of prime degree up to $d$ over $\mathbb{F}_q$ with error $\epsilon$ and seed length $O(d\log n + \log q)$.
\end{thm}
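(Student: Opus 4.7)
The plan is to reuse the construction of \cref{thm:main} unchanged and to tighten only its analysis, exploiting the fact that every polynomial of prime degree is automatically compositionally indecomposable. Indeed, if $\deg f = p$ is prime and $f = g \circ h$ with $g$ univariate and $\deg g, \deg h \geq 2$, then $p = \deg g \cdot \deg h$ contradicts primality; hence $f$ cannot be written nontrivially as a composition. This is precisely the structural hypothesis under which Bogdanov-type arguments give their tightest bounds via Weil-type estimates, and it is what the machinery inherited from \cite{DV22} and used in \cref{thm:main} otherwise has to secure by hand.

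Concretely, I would separate the proof of \cref{thm:main} into two modules. The first handles indecomposable polynomials: following Lecerf's factorization algorithm and \cite{DV22}, one composes $f$ with a low-variable polynomial map $G$ chosen so that $f \circ G$ remains indecomposable, and then applies a Weil-type variety-counting bound together with the hitting-set generator to control the resulting discrepancy. This module needs only $q = \Omega(d^4/\eps^2)$. The second module reduces a general degree-$d$ polynomial to the indecomposable case by iterating along a compositional factorization $f = g_1 \circ \cdots \circ g_k$; it is in this reduction that the additional field-size term $d 2^d/\eps$ arises, essentially from propagating error across the $2^{O(d)}$ possible compositional patterns. Under the prime-degree hypothesis the second module is vacuous, so only the first is invoked, and the field-size requirement collapses to $O(d^4/\eps^2)$. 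The generator itself, and therefore its seed length $O(d\log n + \log q)$, is unchanged.

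The main obstacle I expect is isolating the second module cleanly. One must verify that nothing in the construction of the generator, or in the indecomposable-case analysis, silently uses $q \geq d 2^d/\eps$: any implicit union bound (for instance over the depth of a compositional factorization, or over the degrees of the factors of $f$) needs to be traced carefully to confirm it does not reappear in the prime-degree regime. If a residual polynomial-in-$d$ factor appears, for example from a union bound over the primes $p \leq d$ to which the indecomposable argument is applied, it can be absorbed into the absolute constant $C$ in the field-size hypothesis without affecting the stated form $O(d^4/\eps^2)$.
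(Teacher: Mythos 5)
There is a genuine gap: your proposal misidentifies where the $d2^d/\eps$ term in \cref{thm:main} comes from, and therefore the prime-degree hypothesis is applied at the wrong place. In the paper's argument, the reduction from a general polynomial to an indecomposable one is essentially free: one writes $f=g(h)$ with $h$ indecomposable and replaces $f$ by $h$, since post-composing with $g$ cannot increase statistical distance; there is no iteration over compositional patterns and no $2^{O(d)}$ loss there. The exponential factor arises \emph{inside} the indecomposable-case analysis, namely in \cref{lem:reduction}: to guarantee that the restriction $F$ of an indecomposable $f$ to the pseudorandom plane stays indecomposable, one must ensure that $F-t$ remains irreducible over $\overline{\FF_q(t)}$, and \cref{lem:Bertinian} covers the bad points by $2^{d-1}-1$ polynomials of degree $O(d)$ (one for each proper subset of $[d]$ containing $1$, i.e., each possible bad factorization pattern). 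The union bound over these $2^{d-1}-1$ polynomials is exactly what forces $q\gtrsim d2^d/\eps$. So your ``first module'' does \emph{not} run with $q=O(d^4/\eps^2)$ as claimed, and declaring the ``second module'' vacuous does not remove the exponential term.

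The missing idea is how prime degree helps in that union bound. If $\deg f=d'$ is prime and the restricted polynomial $F_{\mathbf a}-t$ is reducible over $\overline{\FF_q(t)}$, then $F_{\mathbf a}$ is decomposable, and primality forces $F_{\mathbf a}=\alpha(\beta)$ with $\deg\alpha=d'$ and $\deg\beta=1$; hence $F_{\mathbf a}-t=(\alpha-t)(\beta)$ splits \emph{completely into linear factors} over $\overline{\FF_q(t)}$. Thus only one factorization pattern can occur at a bad point, so only a single polynomial $Q$ from \cref{lem:Bertinian} needs to be avoided, and the failure probability in \cref{lem:reduction} drops from $2^{d-1}\delta$ to $2\delta$, which is what yields $q=O(d^4/\eps^2)$. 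Separately, note that your opening claim that prime degree automatically implies indecomposability is false under the paper's definition: $f=h^{d'}$ with $h$ linear is decomposable (the inner polynomial is allowed to have degree $1$). This case must be, and in the paper is, handled by observing that then $f=g(h)$ with $\deg h=1$, so one fools $h$ (degree one) and post-composes with $g$; it cannot be waved away, since the Weil-based equidistribution lemma genuinely fails for such $f$.
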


\subsection{Proof Techniques}
\label{sec:technique}

We start by reviewing the proof of Bogdanov's PRG.
Bogdanov's idea is to consider restrictions of the polynomial $f$ we are trying to fool onto planes, and to argue that ``most'' planes preserve the output distribution of the polynomial. Since a plane is a two-dimensional subspace, after having selected a good plane, we only need to sample two more field elements to select a random element from the plane.

The question now is how to find a good plane. Here, Bogdanov uses results by Kaltofen \cite{Kaltofen95}, who proved an effective version of Hilbert's irreducibility theorem. Kaltofen demonstrated that, for every degree-$d$ irreducible polynomial $f$, there exists a polynomial $P$ of degree roughly $d^4$, whose variables correspond to the parameters of the plane, such that every point at which $P$ is nonzero corresponds to a ``good'' plane for $f$---that is, a plane that preserves the irreducibility of $f$. Therefore, we can use a hitting set generator for polynomials of degree $d^4$ to find a good plane. This results in a factor of $d^4$ in the final seed length.

Over fields of large characteristic (or characteristic zero), Lecerf \cite{Lec06, Lec07} obtained an improved upper bound of $O(d^2)$ on the degree of such polynomials, based on ideas of Ruppert \cite{Ruppert86, Ruppert99} and Gao \cite{Gao2003}. However, Lecerf also presents an example where the degree of such a polynomial must be at least $\Omega(d^2)$, demonstrating that this approach alone may not suffice to achieve an improvement within Bogdanov's framework.

Derksen and Viola circumvent this problem by using a different approach based on preserving the \emph{indecomposability} of polynomials: a polynomial $f(x_1, \ldots, x_n)$ is indecomposable if it cannot be written as $f=g(h(x_1, \ldots, x_n))$ where $g$ is a univariate polynomial of degree at least $2$. We refer to \cref{sec:prelim} for precise definitions.

To prove our result, we revisit Bogdanov's approach, while noting that by applying Lecerf's results \cite{Lec07} (rather than Kaltofen's bounds \cite{Kaltofen95}) in a more careful way, and assuming the field is sufficiently large, it is in fact enough to use hitting sets generators only for polynomials of degree $O(d)$, rather than $O(d^2)$. This requires following the outline above but making sure that at each step, we only need to hit polynomials of degree $O(d)$.

\sloppy
Following Lecerf's notation and terminology, suppose $F(x_1, \ldots, x_n,y)$ is an irreducible polynomial. A point $\mathbf{a}=(a_1, \ldots, a_n)$ is called a \emph{Bertinian good} point if the bivariate polynomial $H(x,y)=F(a_1 x, \ldots, a_n x, y)$ remains irreducible. Lecerf proves that there exists a polynomial $\mathcal{A}(z_1, \ldots, z_n)$ of degree $O(d^2)$ such that if $\mathcal{A}(a_1, \ldots, a_n) \neq 0$ then $\mathbf{a}$ is a Bertinian good point. This is achieved by transforming the question of irreducibility into a question about the rank of a solution space for a certain linear system that depends on $a_1, \ldots, a_n$ (see  \cref{sec:lecerf}). This transformation naturally leads to defining $\mathcal{A}$ as a certain minor of the matrix representing that linear system. The minor has dimensions $O(d) \times O(d)$, and each entry of the matrix is a polynomial of degree $O(d)$, which results in a total bound of $O(d^2)$ on the degree of its determinant, $\mathcal{A}$.

We, however, observe that Lecerf's results actually imply a much stronger structure of this linear system: its solution space, over any field, is always spanned by vectors whose entries are in $\{0,1\}$. In fact, Lecerf directly characterizes the relationship between the irreducible factors of $F$ and the vectors spanning the solution space, though this detail is irrelevant for the moment.

Thus, in Lecerf's argument, $a_1, \ldots, a_n$ are chosen such that a particular minor is nonzero, namely, a certain linear system has no non-trivial solutions. But it is enough to select $a_1, \ldots, a_n$ in a way that only guarantees that this linear system has no non-trivial $0/1$ solutions!

Fixing any vector $u \in \{0,1\}^d$, the requirement that $u$ is not a solution to the linear system turns out to be a condition expressible as $u$ being a nonzero of a polynomial of degree $O(d)$, rather than $O(d^2)$. If we pick $a_1, \ldots, a_n$ from a hitting set generator with error $\delta$ smaller than $2^{-d}$, we can afford to take a union bound over all vectors in $\{0,1\}^d$ and ensure that none of them is a solution to the linear system while keeping the total error small. This is not a big price to pay in terms of the seed length of the HSG, which is $O(d \log n + \log (1/\delta))$, so requiring $\delta$ to be exponentially small in $d$ adds an insignificant additive $O(d)$ term. Where we do pay the price for the small error is in the field size, since the explicit construction of the HSG we use requires the field size to be at least roughly $d/\delta$. Fortunately, however, the dependence of the seed length of our generator on the field size $q$ is also by an additive $O(\log q)$ term, which means that once more requiring $q$ to be exponentially large in $d$ has no adverse effects even on the total seed length of the PRG.

We briefly remark that, for technical reasons, Lecerf's result also requires the characteristic of the underlying field to be zero or at least $d(d-1)+1$. We further elaborate on Lecerf's techniques in  \cref{sec:lecerf}.

We finally mention another technical point. Lecerf's irreducibility characterization \cite{Lec07} assumes a technical condition on the polynomial, which he called Hypothesis (H) (see \cref{sec:H}). Such a ``preprocessing'' step, which makes the polynomial monic in a certain distinguished variable, is common to many factorization algorithms, and can usually be easily guaranteed by applying a random linear transformation to the variables. However, doing this in the na\"{i}ve way would require the use of too many random bits. To solve this problem, in \cref{sec:H} we show that this part can also be derandomized by using a hitting set generator for polynomials of degree $O(d)$.

\section{Preliminaries}
\label{sec:prelim}

We now define the basic objects studied in this paper and introduce the fundamental mathematical concepts used.

\paragraph{Notations.}

All logarithms are base $2$. 
Denote by $\NN$ the set of natural numbers $\{0,1,2,\dots\}$.
For $n\in\NN$, define $[n]=\{1,2,\dots,n\}$. 
For a finite set $A$, denote by $\mathbf{U}_A$ the uniform distribution over $A$.

We often use symbols in bold, e.g., $\mathbf{a}$ or $\mathbf{x}$, as the shorthand for a vector $(a_1,\dots,a_n)$ or a sequence of variables $x_1,\dots,x_n$.

Denote by $\FF_q$ the finite field of size $q$. The algebraic closure of a field $\FF$ is denoted by $\overline{\FF}$.
For a commutative ring $A$ and variables $x_1,\dots,x_n$, we denote by $A[[x_1,\dots,x_n]]$ or $A[[\mathbf{x}]]$ the \emph{ring of formal power series} over $A$ in $x_1,\dots,x_n$, i.e., 
\[
A[[\mathbf{x}]]=\left\{\sum_{\mathbf{e}=(e_1,\dots,e_n)\in\NN^n} a_{\mathbf{e}} x_1^{e_1}\cdots x_n^{e_n}: a_{\mathbf{e}}\in A\right\}.
\]

\paragraph{Pseudorandom Generators and Hitting Set Generators.}

\begin{defn}[Pseudorandom generator, PRG]
\label{def:PRG}
Let $\mathbb{F}_q$ be a finite field. A \emph{pseudorandom generator (PRG)} for $n$-variate polynomials of degree at most $d$ over $\mathbb{F}_q$ with error $\varepsilon$ is an efficiently computable map $G : S \to \mathbb{F}_q^n$ from a finite set $S\neq \emptyset$ such that for every such polynomial $f$ of degree at most $d$, the two distributions $f(G(\mathbf{U}_{S}))$ and $f(\mathbf{U}_{\mathbb{F}_q^n})$ are $\varepsilon$-close in statistical distance. That is,
\[
\frac{1}{2} 
\sum_{a \in \mathbb{F}_q}
\left|
\Pr_{\mathbf{x} \in \mathbb{F}_q^n} [f(\mathbf{x})=a] - \Pr_{\mathbf{y} \in S} [f(G(\mathbf{y}))=a]
\right| \leq \varepsilon.
\]
The quantity $\log |S|$ is called the \emph{seed length} of $G$.
\end{defn}

A weaker object than a PRG is a \emph{hitting set generator}. Here, we only require that a nonzero polynomial is nonzero (with high probability) on the output of the generator. 

\begin{defn}[Hitting set generator, HSG]
\label{def:HSG}
Let $\FF$ be a field. A \emph{hitting set generator (HSG)} with density $1-\delta$ for $n$-variate polynomials of degree at most $d$ over $\FF$ is an efficiently computable map $H : S \to \FF^n$ from a finite set $S\neq\emptyset$ such that for every such nonzero polynomial $f$ of degree at most $d$,
\[
\Pr_{\mathbf{y} \in S} [f(H(\mathbf{y}))=0] \leq \delta.
\]
The quantity $\log |S|$ is called the \emph{seed length} of $G$.
\end{defn}

Building on the earlier work \cite{KS01, Lu12} and algebraic-geometric codes, Guruswami and Xing \cite{GX14} constructed explicit HSGs for low-degree polynomials with asymptotically optimal seed length and density.

\begin{thm}[\cite{GX14}]\label{thm:optimalHSG}
There exists an absolute constant $C$ such that 
for any $n,d,q,\delta$, such that $q \geq Cd/\delta$,
there exists an explicit HSG for $n$-variate polynomials of degree at most $d$ over $\FF_q$ with density $1-\delta$ and seed length $O(d\log n + \log (1/\delta))$.
\end{thm}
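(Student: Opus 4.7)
The plan is to construct the HSG using algebraic-geometric (AG) codes from the Garcia--Stichtenoth tower of function fields, generalizing the Reed-Solomon-based approach of Klivans-Spielman \cite{KS01}. The Reed-Solomon approach picks distinct $\zeta_1,\dots,\zeta_n\in\mathbb{F}_q$ and maps a seed $\mathbf{t}\in\mathbb{F}_q^k$ parameterizing a degree-$<k$ univariate polynomial $p_\mathbf{t}(Z)=\sum_j t_j Z^j$ to the $n$-tuple $(p_\mathbf{t}(\zeta_1),\dots,p_\mathbf{t}(\zeta_n))\in\mathbb{F}_q^n$. Composing with a nonzero degree-$d$ polynomial $f$ gives a polynomial in $\mathbf{t}$ of bounded degree, and one concludes by Schwartz-Zippel. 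The limitation is that $\mathbb{A}^1(\mathbb{F}_q)$ has only $q$ points, forcing $q$ to grow polynomially in $n$ to make the density argument work.

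To avoid this blowup, I would replace $\mathbb{A}^1$ by a smooth projective curve $X/\mathbb{F}_q$ from the Garcia--Stichtenoth tower, which has $N\geq(\sqrt{q}-1)g(1-o(1))$ rational places on a curve of genus $g$, the maximum allowed by the Drinfeld--Vl\u{a}du\c{t} bound. First, I would fix $n$ rational places $P_1,\dots,P_n$ on $X$ and a divisor $D$ of degree $m$ whose support avoids these places. Then the HSG is $H:\mathcal{L}(D)\to\mathbb{F}_q^n$, $\phi\mapsto(\phi(P_1),\dots,\phi(P_n))$, with the seed specifying $\phi$ via its coefficients in a chosen basis of $\mathcal{L}(D)$; by Riemann-Roch, $\dim\mathcal{L}(D)=m-g+1$. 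For any nonzero $f$ of degree at most $d$, the composition $F(\phi)=f(H(\phi))$ is a polynomial in the seed of total degree at most $d$, since each coordinate of $H$ is linear in the seed; assuming $F\not\equiv 0$, Schwartz-Zippel yields density $\geq 1-d/q\geq 1-\delta$ whenever $q\geq d/\delta$. Tuning the parameters---taking $m-g=\Theta(d\log n)$ and choosing the tower level so that $N$ is large enough to accommodate all $n$ places---yields the claimed seed length $O(d\log n+\log(1/\delta))$ together with the field size hypothesis $q\geq Cd/\delta$.

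The hard part will be the nondegeneracy claim $F\not\equiv 0$ whenever $f\not\equiv 0$. This is where the full strength of the AG code enters: one must show that the image of the evaluation map $\mathcal{L}(D)\to\mathbb{F}_q^n$ is algebraically rich enough to distinguish every nonzero degree-$\leq d$ polynomial in $n$ variables, i.e., that no such $f$ vanishes identically on the image. The argument uses the precise minimum-distance bound $n-m$ of the AG code, together with a geometric argument bounding the codimension of the subvariety of $\phi$'s on which $F$ vanishes. The explicit polynomial-time construction of the HSG---including the curve $X$, the places $P_i$, and a basis for $\mathcal{L}(D)$---relies on the explicit recursive defining equations of the Garcia--Stichtenoth tower and standard polynomial-time algorithms for Riemann-Roch computation.
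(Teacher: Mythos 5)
This theorem is not proved in the paper at all; it is imported verbatim from \cite{GX14}, so the only question is whether your sketch is sound on its own terms --- and it is not, for two concrete reasons. First, you fix the evaluation places $P_1,\dots,P_n$ and let the seed vary the function $\phi\in\mathcal{L}(D)$, so the image of your map $H$ is exactly the AG code $C=\{(\phi(P_1),\dots,\phi(P_n)):\phi\in\mathcal{L}(D)\}$, an $\FF_q$-linear subspace of $\FF_q^n$ of dimension $m-g+1=\Theta(d\log n)\ll n$. Any nonzero vector $(a_1,\dots,a_n)$ in the dual code then gives a nonzero \emph{degree-one} polynomial $f=\sum_i a_i x_i$ vanishing identically on the image of $H$, so $H$ fails to be an HSG already for $d=1$ (density $0$, not $1-\delta$). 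The nondegeneracy claim you defer as ``the hard part'' is therefore not hard but false, and no minimum-distance or codimension argument can repair it: the obstruction is exact containment of the image in a hyperplane. Second, even granting nondegeneracy, your seed is a vector of $\dim\mathcal{L}(D)=\Theta(d\log n)$ elements of $\FF_q$, i.e.\ $\Theta(d\log n\cdot\log q)$ bits, whereas the theorem (seed length means $\log|S|$, cf.\ \cref{def:HSG}) promises $O(d\log n+\log(1/\delta))$ with no $\log q$ factor. This is not cosmetic: the paper applies the HSG with $\delta\approx d/q$, and a multiplicative $\log q$ in the HSG seed would destroy the final $O(d\log n+\log q)$ bound of \cref{thm:main}.

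The construction of \cite{GX14} (building on \cite{KS01,Lu12}) transposes the roles you chose. One fixes $n$ functions $\phi_1,\dots,\phi_n$ on a Garcia--Stichtenoth curve with controlled pole orders, chosen together with a Klivans--Spielman-style isolation/weighting step whose purpose is precisely to guarantee that $f(\phi_1,\dots,\phi_n)$ is a nonzero element of the function field for every nonzero $f$ of degree at most $d$; the seed then selects a rational point $P$ of the curve (plus the small amount of isolation randomness), and the output is $(\phi_1(P),\dots,\phi_n(P))$. Density comes from the fact that a nonzero function with pole degree $O(dm)$ has at most $O(dm)$ zeros among the $N$ rational points, and the whole reason for using high-genus curves is that $N$ can vastly exceed $q$, so the number of sample points need not force $q$ to grow with $n$; the seed length is $\log N$ plus $O(d\log n)$ bits for the isolation, matching $O(d\log n+\log(1/\delta))$ (with further refinements in \cite{GX14} to reach the optimal density regime $q=O(d/\delta)$). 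If you want to present a proof of \cref{thm:optimalHSG}, it should follow this ``fixed functions, random point'' architecture rather than ``fixed points, random codeword.''
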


It should also be noted that for hitting set generators, the field $\FF$ does not have to be finite. This generality is used in the statement of the following fact, that an HSG for a field $\FF$ is also a HSG for any extension field $\KK$ of $\FF$.

\begin{fact}[\cite{Bogdanov05, DV22}]\label{fact:extension}
Let $H: S\to\FF$ be an HSG with density $1-\delta$ for polynomials of degree at most $d$ over a field $\FF$, and let $\KK$ be an extension of $\FF$. Then $H$ is also an HSG with density $1-\delta$ for polynomials of degree at most $d$ over $\KK$.
\end{fact}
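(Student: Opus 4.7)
\medskip
\noindent\textbf{Proof plan.} The plan is to reduce an $n$-variate polynomial $f$ over the extension field $\KK$ to an $n$-variate polynomial over the ground field $\FF$ of the same degree, in a way that preserves nonzeroness and preserves vanishing at any point of $\FF^n$. The point is that $H$ outputs values in $\FF^n$, so the distinguisher $f$, though defined over $\KK$, is being evaluated only at $\FF$-rational points.

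Concretely, I would first fix an $\FF$-basis $\{b_i\}_{i \in I}$ of $\KK$ (using the axiom of choice if $[\KK : \FF]$ is infinite; the only case relevant here is a finite extension of finite fields, where a basis is immediate). Given a nonzero $f \in \KK[x_1,\dots,x_n]$ of degree at most $d$, expand each coefficient of $f$ in this basis and regroup by basis element, yielding a unique finite decomposition
\[
f \;=\; \sum_{i \in I} b_i \, f_i, \qquad f_i \in \FF[x_1,\dots,x_n],
\]
where each $f_i$ has degree at most $d$ and only finitely many $f_i$ are nonzero. Since $f \neq 0$, there exists $i^* \in I$ with $f_{i^*} \neq 0$.

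The second step is to observe that for any $\mathbf{a} \in \FF^n$, the value $f(\mathbf{a}) = \sum_i b_i f_i(\mathbf{a})$ is an $\FF$-linear combination of the $b_i$. By linear independence of $\{b_i\}$ over $\FF$, $f(\mathbf{a})=0$ forces $f_i(\mathbf{a})=0$ for all $i$, in particular $f_{i^*}(\mathbf{a})=0$. Applying this to $\mathbf{a}=H(\mathbf{y})$ for $\mathbf{y}\in S$ gives
\[
\Pr_{\mathbf{y} \in S}\bigl[f(H(\mathbf{y}))=0\bigr] \;\le\; \Pr_{\mathbf{y} \in S}\bigl[f_{i^*}(H(\mathbf{y}))=0\bigr] \;\le\; \delta,
\]
where the last inequality uses the HSG property of $H$ applied to the nonzero polynomial $f_{i^*} \in \FF[x_1,\dots,x_n]$ of degree at most $d$.

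There is no real obstacle here: both ingredients (existence of a basis, and linear independence forcing coefficient-wise vanishing) are completely standard. The only thing to be careful about is that the decomposition $f = \sum_i b_i f_i$ is genuinely unique and respects the degree bound, which is clear because expanding coefficients in a basis does not introduce new monomials in $\mathbf{x}$. So the whole argument is a short reduction plus one application of the HSG hypothesis over $\FF$.
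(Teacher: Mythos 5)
Your proposal is correct and is essentially the same argument as the paper's: decompose $f=\sum_b f_b\cdot b$ over an $\FF$-basis of $\KK$, note that vanishing of $f$ at a point of $\FF^n$ forces vanishing of each $f_b$ (in particular a fixed nonzero one), and apply the HSG guarantee over $\FF$ to that nonzero component.
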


\begin{proof}
Let $\mathcal{B}$ be a basis of $\KK$ over $\FF$, and let $f$ be a nonzero polynomial in $\KK[x_1, \ldots, x_n]$ of degree at most $d$. By expressing every coefficient $c \in \KK$ of a monomial in $f$ as a linear combination $c = \sum_{b \in \mathcal{B}} a_b \cdot b $ with $a_b \in \FF$ for every $b \in \mathcal{B}$, we may write $f = \sum_{b \in \mathcal{B}} f_b \cdot b$ such that $f_b \in \FF[x_1, \ldots, x_n]$ is a polynomial of degree at most $d$ for every $b \in \mathcal{B}$, and at least one $f_b$ is nonzero. Thus, for any $u \in S$, $f(H(u)) = \sum_{b \in \mathcal{B}} f_b(u) \cdot b$ is nonzero unless $f_b(H(u))=0$ for every $b$, which happens with probability at most $\delta$ over the choice of $u \in S$.
\end{proof}

\paragraph{Indecomposable Polynomials.}

The \emph{indecomposability} of a polynomial is crucially used in the analysis of the PRG construction in \cite{DV22} as well as in our analysis. We first define this property.

\begin{defn}[Indecomposability]
Let $f\in\FF[\mathbf{x}]$ be a non-constant polynomial over a field $\FF$. It is said to be \emph{decomposable} over $\FF$ if there exist $h\in\FF[\mathbf{x}]$ and a univariate polynomial $g\in\FF[y]$ such that $\deg(g)\geq 2$ and $f=g(h)$.
Otherwise, $f$ is said to be \emph{indecomposable} over $\FF$.
\end{defn}

Obviously, if a polynomial $f\in\FF_q[\mathbf{x}]$ over a finite field $\FF_q$ is indecomposable over $\overline{\FF}_q$, then it is also indecomposable over $\FF_q$.
The following lemma, which is a special case of \cite[Theorem~4.2]{BDN09}, states that the converse is also true.

\begin{lem}[{\cite[Theorem~4.2]{BDN09}}]\label{lem:base-change}
A polynomial $f\in\FF_q[\mathbf{x}]$ that is indecomposable over $\FF_q$ is also indecomposable over $\overline{\FF}_q$.
\end{lem}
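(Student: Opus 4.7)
The plan is to establish the non-obvious direction via Galois descent, arguing the contrapositive: if $f\in\FF_q[\mathbf{x}]$ decomposes over $\overline{\FF}_q$ as $f=g(h)$ with $g\in\overline{\FF}_q[y]$ univariate of degree $k\geq 2$, then it already decomposes over $\FF_q$. Since $g$ and $h$ together involve only finitely many coefficients, they live in some intermediate finite subfield $\FF_{q^s}$, and the extension $\FF_{q^s}/\FF_q$ is cyclic Galois with Frobenius generator $\phi:a\mapsto a^q$. Applying $\phi$ coefficientwise to $f=g(h)$ and using $\phi(f)=f$ yields a second decomposition $f=\phi(g)(\phi(h))$ of the same outer degree $k$.

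Next, I would put every decomposition into a canonical form that the Frobenius preserves. After fixing a monomial order on $\overline{\FF}_q[\mathbf{x}]$, one replaces $h$ by $(h-h(\mathbf{0}))/\mathrm{lc}(h)$ and transforms $g$ accordingly via $g'(y)=g(\mathrm{lc}(h)\cdot y+h(\mathbf{0}))$, preserving $f=g(h)$ while forcing $h$ to be monic with zero constant term. Once $h$ is fixed, $g$ is itself uniquely determined, because $\overline{\FF}_q[h]$ is a polynomial subring of $\overline{\FF}_q[\mathbf{x}]$ and $f$ lies in it. All of the operations involved ($\mathrm{lc}(\cdot)$, evaluation at $\mathbf{0}$, scalar division, and the monomial order itself) commute with $\phi$, so if $(g,h)$ is normalized then so is $(\phi(g),\phi(h))$.

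The heart of the argument is then a rigidity statement: for fixed outer degree $k$, the normalized decomposition of $f$ is unique. Equivalently, if $h_1,h_2\in\overline{\FF}_q[\mathbf{x}]$ are both monic with zero constant term and of the same degree, and $f$ lies in both $\overline{\FF}_q[h_1]$ and $\overline{\FF}_q[h_2]$, then $h_1=h_2$. This is the content of \cite[Theorem~4.2]{BDN09} applied in this setting. Granted rigidity, the second decomposition $(\phi(g),\phi(h))$ must equal $(g,h)$, so both lie in the fixed field $\FF_q$, and we obtain a decomposition of $f$ over $\FF_q$ of the required form.

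I expect the main obstacle to be the rigidity step. In the univariate case it reduces to the classical Ritt-type uniqueness of the inner component of a decomposition of fixed outer degree; in the multivariate setting, however, two distinct $h_1,h_2$ could a priori generate genuinely distinct one-generator polynomial subalgebras of $\overline{\FF}_q[\mathbf{x}]$ both containing $f$, and excluding this possibility is what the structural work of \cite{BDN09} is needed for. The preceding Galois-equivariant normalization is routine once the monomial-order setup is chosen consistently.
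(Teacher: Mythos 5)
First, note that the paper does not actually prove \cref{lem:base-change}: it is imported verbatim as a special case of \cite[Theorem~4.2]{BDN09}, so there is no internal proof to compare against and your proposal has to stand on its own. Your Galois-descent skeleton is sensible and correctly identifies where finiteness of $\FF_q$ enters: the extension $\FF_{q^s}/\FF_q$ is separable (Galois), which is exactly what fails for imperfect fields, where the statement itself is false (e.g.\ $x^p+uy^p$ over $\FF_p(u)$ is indecomposable over the base field but equals $(x+u^{1/p}y)^p$ over the closure, and no automorphism moves $u^{1/p}$). The normalization step and the observation that $g$ is determined by $h$ are also fine.

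The gap is the ``rigidity'' step, in two respects. First, attributing it to \cite[Theorem~4.2]{BDN09} is circular: that theorem is (a general form of) the base-change statement you are trying to prove---this is precisely what the paper cites it for---not a uniqueness-of-decomposition statement, so the heart of your argument is assumed rather than proved. Second, the rigidity claim as you formulate it (for fixed outer degree $k$, the normalized inner component is unique) is false in positive characteristic without extra hypotheses, while \cref{lem:base-change} is stated with no restriction on $q$ or $\deg f$. Concretely, over $\overline{\FF}_p$ take $f=(x_1x_2)^{p^2}+(x_1x_2)^p$; then $f=g_1(h_1)=g_2(h_2)$ with $g_1(t)=t^p+t$, $h_1=(x_1x_2)^p$, and $g_2(t)=t^p$, $h_2=(x_1x_2)^p+x_1x_2$. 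Both inner components are monic with zero constant term and of the same degree, both outer degrees equal $p$, yet $h_1\neq h_2$ and they are not affinely related. In tame characteristic (zero, or larger than $\deg f$, which is all the paper needs elsewhere, though not what the lemma asserts) the uniqueness you want is true and can be proved, e.g.\ via uniqueness of approximate $k$-th roots or a Hensel-type coefficient comparison after making $h$ monic in one variable; supplying such a proof, and separately handling the wild case (for instance by first extracting $p$-th powers, using that $\FF_q$ is perfect), is what is missing before the Frobenius-fixed-point conclusion can be drawn.
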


In \cite{DV22}, Derksen and Viola proved the following result, which states that if a polynomial is indecomposable, then its outputs are equidistributed.

\begin{lem}[{\cite[Lemma~12]{DV22}}]\label{lem:equidistributed}
There exists an absolute constant $C>0$ such that the following holds: Suppose $f\in\FF_q[\mathbf{x}]=\FF_q[x_1,\dots,x_n]$ is indecomposable over $\FF_q$. Then $f(\mathbf{U}_{\FF_q^n})$ is $\epsilon$-close to $\mathbf{U}_{\FF_q}$, where $\epsilon=C d^2/\sqrt{q}$.
\end{lem}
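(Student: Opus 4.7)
The plan is to recast the statistical distance as a sum over fibers of $f$ and bound each fiber using a Weil-style point count. For $a\in\FF_q$ set $N_a = |\{\mathbf{x}\in\FF_q^n : f(\mathbf{x})=a\}|$, so the quantity to bound is $\frac{1}{2q^n}\sum_{a\in\FF_q}|N_a - q^{n-1}|$, and we want this to be $O(d^2/\sqrt{q})$.

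The crux of the argument---and the only place where the hypothesis of indecomposability is actually used---is to upgrade indecomposability to a quantitative first Bertini statement: $f(\mathbf{x})-c$ remains \emph{absolutely} irreducible for all $c\in\overline{\FF}_q$ outside an exceptional set $B$ of size $O(d^2)$. By \cref{lem:base-change}, $f$ is already indecomposable over $\overline{\FF}_q$, so I may work there. One then invokes a quantitative Bertini-Noether / Ruppert-type theorem---the same Ruppert/Gao framework alluded to in \cref{sec:technique} which also underlies Lecerf's irreducibility criterion---to exhibit a nonzero polynomial $\Phi(c)$ of degree $O(d^2)$ whose nonvanishing certifies absolute irreducibility of $f-c$; the bad set $B$ is its vanishing locus. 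This is the step I expect to be the main obstacle: one needs not merely that \emph{most} shifts of an indecomposable $f$ stay absolutely irreducible, but the precise $O(d^2)$ bound on the exceptional set, since it is this bound that will match the target error $d^2/\sqrt{q}$.

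Once the Bertini step is in hand, the rest is routine bookkeeping. For each $a\in\FF_q\setminus B$ the hypersurface $\{f=a\}\subset\overline{\FF}_q^n$ is absolutely irreducible of degree $d$ in $n$ variables, so the Lang-Weil bound gives $|N_a - q^{n-1}| = O(d^2)\,q^{n-3/2}$. For the at most $O(d^2)$ remaining values $a\in\FF_q\cap B$, the trivial Schwartz-Zippel bound $N_a\le d\,q^{n-1}$ yields $|N_a - q^{n-1}|\le (d+1)q^{n-1}$. Summing,
\[
\sum_{a\in\FF_q}|N_a-q^{n-1}| \le q\cdot O(d^2 q^{n-3/2}) + O(d^2)\cdot(d+1)q^{n-1} = O(d^2\,q^{n-1/2}) + O(d^3\,q^{n-1}),
\]
so the statistical distance is $O(d^2/\sqrt{q}) + O(d^3/q)$. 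The second term is absorbed into the first as soon as $q$ is at least a constant multiple of $d^2$; for smaller $q$, the claimed bound $Cd^2/\sqrt{q}$ is vacuously $\geq 1$ once $C$ is sufficiently large, so the lemma holds trivially.
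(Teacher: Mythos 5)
Your proposal is correct in outline and takes exactly the route the paper itself takes: the paper does not prove \cref{lem:equidistributed} but imports it from \cite{DV22}, and the sketch it gives is precisely your argument---pass to $\overline{\FF}_q$ via \cref{lem:base-change}, use indecomposability (through a Bertini/spectrum-type statement) to conclude that all but $O(d^2)$ of the shifted polynomials $f-a$ are absolutely irreducible, apply a Weil/Lang--Weil point count on the good fibers, and bound the few bad fibers trivially. The only point needing a little more care in the ``routine bookkeeping'' is that explicit Lang--Weil-type hypersurface bounds carry a lower-order $q^{n-2}$ term with a $d$-dependent constant, which must be absorbed (as the cited proof does) in the regime where $q$ is only polynomially large in $d$; your trivial-case argument for $q=O(d^4)$ does not quite cover that entire range, but this is a quantitative detail rather than a flaw in the approach.
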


The proof of \cref{lem:equidistributed} is based on the observation that the indecomposability of $f$ precisely captures the property that for most $b\in\FF_q$, the variety $f^{-1}(b)$, defined by the constraint $f(\mathbf{x})=b$, is \emph{absolutely irreducible}. This condition of absolute irreducibility is required by the \emph{Weil bound} \cite{Wei49}. Consequently, one can apply the Weil bound to show that for most $b$, the number of points in $f^{-1}(b)\cap \FF_q^n$ is close to $q^{n-1}$, thereby proving the equidistribution of the output of $f$. For details, we refer the reader to \cite{DV22}.

Finally, the following lemma connects indecomposability with irreducibility over algebraically closed fields. It is explicitly stated in, e.g., \cite{CN10}.

\begin{lem}[{\cite[Lemma~7]{CN10}}]\label{lem:indec2irred}
Let $f\in\FF[\mathbf{x}]$ be a non-constant polynomial over a field $\FF$.
Then $f$ is indecomposable over $\overline{\FF}$ iff $f - t$ is irreducible over $\overline{\FF(t)}$, where $t$ is a new variable.
\end{lem}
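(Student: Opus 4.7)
The plan is to prove the two implications separately, starting with the easy direction $(\Leftarrow)$ by contrapositive. Suppose $f = g(h)$ with $g \in \overline{\FF}[y]$ of degree $m \ge 2$ and $h \in \overline{\FF}[\mathbf{x}]$ non-constant. Since $\overline{\FF(t)}$ is algebraically closed, $g(y) - t$ splits completely as $g(y) - t = c \prod_{i=1}^m (y - \alpha_i)$ for some $\alpha_i \in \overline{\FF(t)}$, and substituting $y = h(\mathbf{x})$ yields the nontrivial factorization $f - t = c \prod_{i=1}^m (h(\mathbf{x}) - \alpha_i)$ in $\overline{\FF(t)}[\mathbf{x}]$.

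For the harder direction $(\Rightarrow)$, I argue the contrapositive: assuming $f - t$ factors nontrivially in $\overline{\FF(t)}[\mathbf{x}]$, I want to produce $h \in \overline{\FF}[\mathbf{x}]$ and $g \in \overline{\FF}[y]$ with $\deg g \ge 2$ and $f = g(h)$. As preliminaries, $f - t$ is irreducible in $\FF(t)[\mathbf{x}]$ (viewing it in $\FF(\mathbf{x})[t]$ it is monic and linear in $t$, and Gauss's lemma transfers irreducibility), and it is squarefree in $\overline{\FF(t)}[\mathbf{x}]$ (any repeated factor would have to divide $\partial_t(f - t) = -1$, which is a unit). The hypothesis thus translates into the statement that the geometric generic fiber of the morphism $f : \mathbb{A}^n_{\overline{\FF}} \to \mathbb{A}^1_{\overline{\FF}}$ decomposes into $r \ge 2$ distinct irreducible components.

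The key construction is a Stein-factorization-style intermediate variety. Let $B$ denote the integral closure of $\overline{\FF}[f]$ inside $\overline{\FF}[\mathbf{x}]$, and put $W = \mathrm{Spec}(B)$. Then $f$ factors as $\mathbb{A}^n \to W \to \mathbb{A}^1$, with $W \to \mathbb{A}^1$ finite of degree $r$ (matching the number of components of the geometric generic fiber) and $\mathbb{A}^n \to W$ dominant. If I can further establish that $B = \overline{\FF}[h]$ for a single polynomial $h \in \overline{\FF}[\mathbf{x}]$, then $f \in B$ immediately yields $f = g(h)$ with $g \in \overline{\FF}[y]$ of degree $r \ge 2$, as required.

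The main obstacle is this structural claim, $W \cong \mathbb{A}^1$, which I would prove in two stages. First, since $B$ is a finitely generated normal $\overline{\FF}$-algebra of Krull dimension one, $W$ is a smooth affine curve whose function field is a transcendence-degree-one subfield of $\overline{\FF}(\mathbf{x})$; a generalized L\"uroth theorem then forces this subfield to be purely transcendental, and hence $W$ is rational, i.e., an open subset of $\mathbb{P}^1_{\overline{\FF}}$. Second, I would rule out the possibility of any additional deleted points beyond $\infty$: any finite deleted point would produce a non-constant regular unit on $W$, whose pullback along the dominant map $\mathbb{A}^n \to W$ would be a non-constant unit in $\overline{\FF}[\mathbf{x}]$, contradicting the fact that units in a polynomial ring are constants. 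Hence $W = \mathbb{A}^1$, $B = \overline{\FF}[h]$ for some $h$, and the desired decomposition $f = g(h)$ follows.
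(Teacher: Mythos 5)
The paper itself does not prove this lemma (it is imported from \cite{CN10}), so your argument has to stand on its own. Your $(\Leftarrow)$ direction is fine, and the L\"uroth-plus-units argument that the integral closure $B$ of $\overline{\FF}[f]$ in $\overline{\FF}[\mathbf{x}]$ is a polynomial ring $\overline{\FF}[h]$ is a reasonable (and essentially classical) route. But the $(\Rightarrow)$ direction has two genuine gaps. First, your translation of the hypothesis into ``the geometric generic fiber has $r\ge 2$ distinct irreducible components,'' and the squarefreeness claim supporting it, fail in small positive characteristic. The derivation $\partial_t$ does not extend to $\overline{\FF(t)}$ (no derivation of $\FF_p(t^{1/p})$ can send $t\mapsto 1$), so ``a repeated factor divides $\partial_t(f-t)=-1$'' is not a valid argument there; and indeed $f-t$ need not be squarefree: for $f=x_1^p$ in characteristic $p$ one has $f-t=(x_1-t^{1/p})^p$, reducible but with a single component, so $r=1$ and your construction produces nothing. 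Since the lemma is stated over an arbitrary field $\FF$, this case must be handled separately (e.g.\ show a repeated factor forces all $\partial_{x_i}f=0$, hence $f$ is a $p$-th power and thus decomposable); your write-up simply asserts a false statement at this point.

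Second, and more centrally: the assertion that $W\to\mathbb{A}^1$ is finite of degree $r$ ``matching the number of components of the geometric generic fiber,'' from which you deduce $\deg g=r\ge 2$, is exactly the substance of the hard direction, and you give no argument for it. Stein factorization in the form you invoke concerns proper morphisms, and $f:\mathbb{A}^n\to\mathbb{A}^1$ is not proper. What you actually need is that reducibility of $f-t$ over $\overline{\FF(t)}$ yields an element of $\overline{\FF}[\mathbf{x}]$ that is integral over $\overline{\FF}[f]$ but lies outside it, so that $B\supsetneq\overline{\FF}[f]$. That requires (i) relating the number of geometric components of the generic fiber to the relative algebraic closure $L$ of $\overline{\FF}(f)$ in $\overline{\FF}(\mathbf{x})$ (this goes through the \emph{separable} degree, again a characteristic-$p$ issue), and (ii) showing that $\operatorname{Frac}(B)$ equals $L$, or at least strictly contains $\overline{\FF}(f)$ --- i.e.\ that some element of $L\setminus\overline{\FF}(f)$ can be realized as a polynomial integral over $\overline{\FF}[f]$. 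Neither step is automatic; (ii) is essentially the known but nontrivial fact that this relative algebraic closure is generated by a polynomial. As written, your proof could terminate with $B=\overline{\FF}[f]$ and $g$ linear even when $f-t$ is reducible, so the key inequality $\deg g\ge 2$ is asserted rather than proved.
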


\paragraph{Resultants.}
Let $f(x) = \sum_{i=0}^{d_1} a_i y^i$ and $g(x) = \sum_{i=0}^{d_2} b_i y^i$ be two univariate polynomials in $y$ over a field $\FF$ and suppose that $d_1+d_2 > 0$. 
The \emph{Sylvester Matrix} of $f$ and $g$ is the $(d_1+d_2) \times (d_1 + d_2)$ matrix

\[
\begin{pmatrix}
	a_{0} & & & &  b_{0} & & & & \\
	a_{1} & a_{0} & &  & b_{1} & b_{0} & & \\
	a_{2} & a_{1} & \ddots & & b_{2} & b_{1} & \ddots & & \\
	\vdots & & \ddots & a_{0} & \vdots & & \ddots & & b_{0} \\
	& \vdots & & a_{1} & b_{d_2} & \vdots & &  & b_{1} \\
	a_{d_1} & & & &  & b_{d_2}  & & & \\
	& a_{d_1} & & \vdots & &  & & \vdots & \\
	& & \ddots & & & & \ddots & & \\
	& & & a_{d_1} & & & &  & b_{d_2}
\end{pmatrix}_{\raisebox{2pt}{.}}
\]

The determinant of this matrix is called the \emph{resultant} of $f$ and $g$, and is denoted $\Res{f,g}$. It holds that $f$ and $g$ have a common factor if and only if $\Res{f,g}=0$ (\cite[Proposition 3 in Chapter 3, Section 6]{CLO07}).

Thus, in the case where $g=\pdv{f}{y}$, it holds that $\Res{f,g} \neq 0$ if and only if $f$ does not have a root of multiplicity greater than one.

\paragraph{Hensel Lifting.}
Hensel lifting is a general technique for ``lifting'' roots or factorizations of a polynomial modulo an ideal $I$ of a ring $R$ to those modulo powers of $I$, under some mild conditions.
The use of Hensel's lifting lemma is standard in multivariate factorization algorithms, and it is available in various forms. We state one standard form, which can be derived from \cite[Theorem~7.3]{Eis95} as a special case. This form is particularly relevant to our discussion of Lecerf's techniques in \cref{sec:lecerf}.

\begin{lem}[Hensel's lifting lemma]\label{lem:hensel}
Let $f\in\FF[x_1,\dots,x_n,y]=\FF[\mathbf{x},y]$ be a nonzero polynomial over a field $\FF$. Suppose $\bar{\lambda}\in \FF$ is a simple root of $f(\mathbf{0},y)\in\FF[y]$.
Then there exists unique $\lambda\in\FF[[\mathbf{x}]]$ such that
\begin{enumerate}
    \item $f(\mathbf{x},\lambda)=0$, i.e., $\lambda$ is a root of $f$ as a univariate polynomial in $y$ over $\FF[\mathbf{x}]$, and
    \item $\lambda(\mathbf{0})=\bar{\lambda}$.
\end{enumerate}
\end{lem}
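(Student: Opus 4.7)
The plan is to construct $\lambda$ as a formal power series degree by degree, using a Newton-style iteration. Let $\mathfrak{m}=(x_1,\dots,x_n)$ denote the maximal ideal of $\FF[[\mathbf{x}]]$, and write $\lambda=\sum_{k\geq 0}\lambda_k$, where $\lambda_k$ is homogeneous of degree $k$. The base case is forced: the condition $\lambda(\mathbf{0})=\bar\lambda$ gives $\lambda_0=\bar\lambda$. Proceeding inductively, I will assume that the partial sum $\lambda^{(k)}:=\lambda_0+\dots+\lambda_k$ satisfies $f(\mathbf{x},\lambda^{(k)})\equiv 0\pmod{\mathfrak{m}^{k+1}}$, and then show that there is a unique homogeneous $\lambda_{k+1}$ of degree $k+1$ extending this to $f(\mathbf{x},\lambda^{(k+1)})\equiv 0\pmod{\mathfrak{m}^{k+2}}$.

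For the inductive step, expand $f$ in a Taylor series in $y$ around $y=\lambda^{(k)}$:
\[
f(\mathbf{x},\lambda^{(k)}+\lambda_{k+1})=f(\mathbf{x},\lambda^{(k)})+\lambda_{k+1}\cdot\pdv{f}{y}(\mathbf{x},\lambda^{(k)})+\lambda_{k+1}^2\cdot R(\mathbf{x},\lambda^{(k)},\lambda_{k+1}),
\]
for some polynomial $R$. Since $\lambda_{k+1}\in\mathfrak{m}^{k+1}$, the quadratic remainder lies in $\mathfrak{m}^{2(k+1)}\subseteq\mathfrak{m}^{k+2}$ and can be discarded. Setting $c:=\pdv{f}{y}(\mathbf{0},\bar\lambda)$, which is nonzero by the simple-root hypothesis, we have $\pdv{f}{y}(\mathbf{x},\lambda^{(k)})\equiv c\pmod{\mathfrak{m}}$, so $\lambda_{k+1}\cdot\pdv{f}{y}(\mathbf{x},\lambda^{(k)})\equiv c\,\lambda_{k+1}\pmod{\mathfrak{m}^{k+2}}$. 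Writing $g$ for the degree-$(k+1)$ homogeneous component of $f(\mathbf{x},\lambda^{(k)})$, the congruence $f(\mathbf{x},\lambda^{(k+1)})\equiv 0\pmod{\mathfrak{m}^{k+2}}$ becomes $c\,\lambda_{k+1}=-g$, which has the unique solution $\lambda_{k+1}=-g/c$, a homogeneous polynomial of degree $k+1$.

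Iterating this procedure yields a unique sequence $(\lambda_k)_{k\geq 0}$, and hence a unique $\lambda=\sum_{k\geq 0}\lambda_k\in\FF[[\mathbf{x}]]$ with $\lambda(\mathbf{0})=\bar\lambda$. To verify $f(\mathbf{x},\lambda)=0$, observe that for every $k$, $f(\mathbf{x},\lambda)$ and $f(\mathbf{x},\lambda^{(k)})$ agree modulo $\mathfrak{m}^{k+1}$ (since $\lambda-\lambda^{(k)}\in\mathfrak{m}^{k+1}$ and the same Taylor argument applies), so $f(\mathbf{x},\lambda)\in\bigcap_{k\geq 0}\mathfrak{m}^{k+1}=(0)$. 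Uniqueness of $\lambda$ as a whole follows from the uniqueness of each $\lambda_k$ established in the inductive step: any other candidate $\lambda'$ must share each homogeneous component with $\lambda$.

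The only genuinely delicate point — the rest is routine bookkeeping — is checking that modulo $\mathfrak{m}^{k+2}$ the Taylor remainder and the non-constant part of $\pdv{f}{y}(\mathbf{x},\lambda^{(k)})$ both drop out, so that determining $\lambda_{k+1}$ reduces to a single linear equation in $\FF$ with invertible coefficient $c\neq 0$. This is precisely where the simple-root hypothesis is used, and it is what makes the iteration both well-defined and unique at every step.
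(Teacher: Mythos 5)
Your proof is correct. Note that the paper does not actually prove this lemma: it simply invokes it as a standard fact, derived as a special case of the general Hensel lemma for complete rings in Eisenbud (Theorem~7.3), which is stated for lifting factorizations/roots along an $I$-adic completion. What you give instead is the classical self-contained argument: a Newton-type induction determining the homogeneous components of $\lambda$ one degree at a time, using the Taylor expansion $f(\mathbf{x},a+h)=f(\mathbf{x},a)+h\,\tfrac{\partial f}{\partial y}(\mathbf{x},a)+h^2R(\mathbf{x},a,h)$ (which holds over any commutative ring, so there is no characteristic issue) and the invertibility of $c=\tfrac{\partial f}{\partial y}(\mathbf{0},\bar{\lambda})$, which is exactly where simplicity of the root enters. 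The inductive step is airtight: the quadratic remainder lies in $\mathfrak{m}^{2(k+1)}\subseteq\mathfrak{m}^{k+2}$, the non-constant part of $\tfrac{\partial f}{\partial y}(\mathbf{x},\lambda^{(k)})$ contributes only in degree $\geq k+2$ when multiplied by $\lambda_{k+1}\in\mathfrak{m}^{k+1}$, and the resulting equation $c\lambda_{k+1}=-g$ has a unique solution; the conclusion then follows from $\bigcap_k\mathfrak{m}^k=(0)$ in $\FF[[\mathbf{x}]]$. Two points worth making explicit, though both are routine: the base case of the induction ($f(\mathbf{x},\bar{\lambda})\equiv 0 \pmod{\mathfrak{m}}$) is exactly the hypothesis $f(\mathbf{0},\bar{\lambda})=0$; and for uniqueness you should say that the truncations $\lambda'^{(k)}$ of any competing solution $\lambda'$ satisfy $f(\mathbf{x},\lambda'^{(k)})\equiv 0\pmod{\mathfrak{m}^{k+1}}$ (by the same divisibility of $f(\mathbf{x},\lambda')-f(\mathbf{x},\lambda'^{(k)})$ by $\lambda'-\lambda'^{(k)}$), so the step-by-step uniqueness applies to them and forces $\lambda'=\lambda$. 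The trade-off versus the paper's route is the usual one: citing the general theorem is shorter and more general (it handles arbitrary complete local settings and lifting of coprime factorizations, not just simple roots), while your argument is elementary, fully explicit, and makes visible exactly which hypotheses are used.
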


\section{Hypothesis~(H)}
\label{sec:H}

Lecerf's papers \cite{Lec06, Lec07} on multivariate polynomial factoring assume a hypothesis about the polynomial $f$, which he calls Hypothesis~(H). Such a hypothesis can be satisfied with high probability by applying a random linear transformation on the variables.

In this section, we discuss Lecerf's Hypothesis~(H) and show that, for our purpose, the random linear transformation can be derandomized by using a HSG for polynomials of degree $O(d)$. The fact that we are interested in the irreducibility of $f-t$ for an indeterminate $t$, rather than that of $f$, is crucial in keeping the degree linear in $d$.

Let $\FF$ be a field.
First, we define Hypothesis~(H).

\begin{defn}[{Hypothesis~(H) \cite{Lec06, Lec07}}]\label{defn:hypH}
Let $f\in\FF[x_1,\dots,x_n,y]=\FF[\mathbf{x}, y]$ be a non-constant polynomial. We say $f$ satisfies \emph{Hypothesis~(H)} if
\begin{enumerate}
\item $f$ is monic in $y$ and $\deg_y(f)=\deg(f)$,
\item $\Res{f(\mathbf{0},y),\pdv{f}{y}(\mathbf{0},y)}\neq 0$.
\end{enumerate}
\end{defn}

We also need a family of invertible linear transformations defined as follows.

\begin{defn}
For $\mathbf{a}=(a_1,\dots,a_n)\in\FF^n$, 
let $s_{\mathbf{a}}$ be the $\FF$-linear automorphism of $\FF[\mathbf{x}, y]$ that fixes $y$ and sends $x_i$ to $x_i+a_i y$.
\end{defn}

\begin{lem}\label{lem:hypH-1}
Let $f\in \FF[\mathbf{x}, y]$ be a nonzero polynomial of degree at most $d$. Then there exists a nonzero polynomial $B\in \FF[\mathbf{x}]$ of degree at most $d$ such that for every $\mathbf{a}\in\FF^n$ satisfying $B(\mathbf{a})\neq 0$, it holds that $\deg_y (s_{\mathbf{a}}(f))=d$ and the coefficient of $y^d$ in $s_{\mathbf{a}}(f)$ is in $\FF^\times$.
\end{lem}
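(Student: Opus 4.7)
The plan is to extract the coefficient of $y^d$ in $s_{\mathbf{a}}(f)$ as an explicit polynomial expression in $\mathbf{a}$, and take that polynomial as $B$. First, I decompose $f = \sum_{k=0}^{d} f_k$ by total degree, where each $f_k \in \FF[\mathbf{x},y]$ is the homogeneous component of degree $k$ in the $n+1$ variables $(\mathbf{x}, y)$. Because $s_{\mathbf{a}}$ is an invertible $\FF$-linear change of variables (it replaces each $x_i$ by $x_i + a_i y$ while fixing $y$), it preserves total degree, so each $s_{\mathbf{a}}(f_k)$ remains homogeneous of degree $k$. Consequently, the only summand that can contribute a pure $y^d$ monomial is $s_{\mathbf{a}}(f_d)$, and within this homogeneous piece the coefficient of $y^d$ is obtained by specializing $\mathbf{x} = \mathbf{0}$; a direct computation gives this coefficient as $s_{\mathbf{a}}(f_d)(\mathbf{0}, 1) = f_d(\mathbf{a}, 1) \in \FF$.

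This calculation suggests defining $B(\mathbf{x}) := f_d(\mathbf{x}, 1) \in \FF[\mathbf{x}]$, which has total degree at most $d$ since $f_d$ does. By the previous paragraph, $B(\mathbf{a})$ is precisely the coefficient of $y^d$ in $s_{\mathbf{a}}(f)$ viewed as a polynomial in $y$ over $\FF[\mathbf{x}]$; in particular this coefficient already lies in $\FF$ (it has no residual $\mathbf{x}$-dependence), and it lies in $\FF^\times$ exactly when $B(\mathbf{a}) \neq 0$. Moreover, since $s_{\mathbf{a}}$ preserves total degree, $\deg_y(s_{\mathbf{a}}(f)) \leq \deg(s_{\mathbf{a}}(f)) = \deg(f) \leq d$, and the non-vanishing of the $y^d$-coefficient then forces $\deg_y(s_{\mathbf{a}}(f)) = d$.

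The only remaining point is to verify that $B \neq 0$. The conclusion of the lemma can hold only when $\deg(f) = d$ (if $\deg(f) < d$, the $y^d$-coefficient of $s_{\mathbf{a}}(f)$ is identically zero), so this is the substantive case and there $f_d \neq 0$. Writing $f_d = \sum_{|\mathbf{e}| + k = d} c_{\mathbf{e}, k}\, \mathbf{x}^{\mathbf{e}} y^k$ and substituting $y = 1$ gives $B(\mathbf{x}) = \sum_{|\mathbf{e}| \leq d} c_{\mathbf{e},\, d-|\mathbf{e}|}\, \mathbf{x}^{\mathbf{e}}$, and homogeneity of $f_d$ puts these coefficients in bijection with those of $f_d$, so $B \neq 0$. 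There is no real obstacle here beyond this bookkeeping—the lemma is essentially a homogeneity computation, and the main conceptual step is simply recognizing that the top homogeneous part of $f$ under the dehomogenization $y \mapsto 1$ is the right candidate for $B$.
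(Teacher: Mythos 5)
Your proof is correct and is essentially the paper's argument: your choice $B(\mathbf{x}) = f_d(\mathbf{x},1)$ is literally the same polynomial as the paper's $B=\sum_{i=0}^d c_i$, where $f_d=\sum_i c_i(\mathbf{x})y^i$ is the top homogeneous part, and the homogeneity bookkeeping (only $f_d$ can contribute a $y^d$ term, and its $y^d$-coefficient is the constant $f_d(\mathbf{a},1)$) matches the paper's computation. Your explicit remark that the statement is only substantive when $\deg(f)=d$ (so that $f_d\neq 0$ and $B\neq 0$) is a fair reading of the same implicit assumption the paper makes.
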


\begin{proof}
Let $f_d$ be the degree-$d$ homogeneous part of $f$, so that we can write $f=f_d+g$ where $g=f-f_d$ has degree less than $d$.
Write $f_d=\sum_{i=0}^d c_i(\mathbf{x}) y^i$, where each $c_i\in\FF[\mathbf{x}]$ is either zero or a homogeneous polynomial of degree $d-i$.

Consider $\mathbf{a}\in\FF^n$. Note that
\[
s_{\mathbf{a}}(f)=s_{\mathbf{a}}(f_d)+s_{\mathbf{a}}(g)=\sum_{i=0}^d s_{\mathbf{a}}(c_i(\mathbf{x})) y^i + s_{\mathbf{a}}(g)
=\sum_{i=0}^d c_i(\mathbf{x}+y\cdot \mathbf{a}) y^i + g(\mathbf{x}+y\cdot \mathbf{a}, y).
\]
As $\deg(g)\leq d$ and each $c_i$ is either zero or homogeneous of degree $d-i$, we have that $\deg_y s_{\mathbf{a}}(f)\leq d$, and that the coefficient of $y^d$ in $s_{\mathbf{a}}(f)$ is $\sum_{i=0}^d c_i(\mathbf{a})\in\FF$.
So we may choose $B=\sum_{i=0}^d c_i$, which is a nonzero polynomial of degree at most $d$.
\end{proof}

\begin{lem}\label{lem:hypH-2}
Assume $f\in \FF[\mathbf{x}, y]$ is a polynomial of degree $d\geq 1$ that satisfies Item~1 of Hypothesis~(H).
Further assume that $\mathrm{char}(\FF)$ is either zero or greater than $d$. Let $c\in\FF^\times$. Then $f+ct$ is a degree-$d$ polynomial satisfying Hypothesis~(H) as a polynomial over $\FF(t)$.
\end{lem}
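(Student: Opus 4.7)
The plan is to verify the two items of Hypothesis~(H) for $f+ct$ viewed as a polynomial in $\FF(t)[\mathbf{x},y]$, with the nontrivial content concentrated in Item~2.

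For Item~1, note that from the perspective of $\FF(t)[\mathbf{x},y]$, the summand $ct$ is simply a nonzero scalar in $\FF(t)$ (it contains no occurrence of $\mathbf{x}$ or $y$). Therefore $f+ct$ has the same monomials in $\mathbf{x}$ and $y$ as $f$, except that the constant term is shifted by $ct$. In particular, $\deg_y(f+ct)=\deg_y(f)=d$, the leading coefficient in $y$ is still $1$, and $\deg(f+ct)=\deg(f)=d$, so Item~1 is preserved.

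For Item~2, let $p(y):=f(\mathbf{0},y)\in\FF[y]$, which is a monic polynomial of degree $d$ by Item~1. Then $(f+ct)(\mathbf{0},y)=p(y)+ct$ and $\pdv{(f+ct)}{y}(\mathbf{0},y)=p'(y)$, so the task reduces to showing $\Res{p(y)+ct,\,p'(y)}\neq 0$ in $\FF(t)$. Equivalently, I need $p(y)+ct$ and $p'(y)$ to be coprime in $\FF(t)[y]$, i.e., to share no root in $\overline{\FF(t)}$. The key observation is that $p'\in\FF[y]$ does not involve $t$, and moreover $p'\neq 0$ because $p$ is monic of degree $d$ and $\mathrm{char}(\FF)$ is either zero or greater than $d$, so the leading term $d\cdot y^{d-1}$ of $p'$ survives.

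Now suppose for contradiction that $\alpha\in\overline{\FF(t)}$ is a common root of $p(y)+ct$ and $p'(y)$. Since $\alpha$ is a root of the nonzero polynomial $p'(y)\in\FF[y]$, it is algebraic over $\FF$, hence lies in $\overline{\FF}$. Then $p(\alpha)\in\overline{\FF}$ as well, and the equation $c t + p(\alpha)=0$ forces $t = -p(\alpha)/c \in \overline{\FF}$. But $t$ is transcendental over $\FF$, so it is transcendental over $\overline{\FF}$ (as $\overline{\FF}/\FF$ is algebraic), a contradiction. Hence the resultant is nonzero and Item~2 holds, completing the proof. There is no real obstacle here; the only subtlety to flag is the use of the characteristic hypothesis to guarantee $p'\neq 0$, so that $\alpha$ being algebraic over $\FF$ is indeed a meaningful conclusion.
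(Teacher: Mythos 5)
Your proof is correct, but it takes a genuinely different route from the paper's. The paper verifies Item~2 by brute force: it writes out the $(2d-1)\times(2d-1)$ Sylvester matrix of $(f+ct)(\mathbf{0},y)$ and $\pdv{(f+ct)}{y}(\mathbf{0},y)$, observes that the resultant $h\in\FF[t]$ has $\deg_t h\leq d-1$, and that the coefficient of $t^{d-1}$ is exactly $c^{d-1}d^{d}$ (only diagonal entries contribute), which is nonzero by the characteristic assumption; hence $h\neq 0$. You instead invoke the equivalence between nonvanishing of the resultant and coprimality of $p(y)+ct$ and $p'(y)$ over $\FF(t)$, and rule out a common root $\alpha\in\overline{\FF(t)}$ by a transcendence argument: $\alpha$ is a root of the nonzero polynomial $p'\in\FF[y]$, hence algebraic over $\FF$, so $t=-c^{-1}p(\alpha)$ would be algebraic over $\FF$, a contradiction. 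Both arguments use $\mathrm{char}(\FF)=0$ or $>d$ in the same place, namely to ensure $p'$ is nonzero with leading coefficient $d$ (for you this is also what legitimizes applying the resultant--common-factor criterion at the true degrees $d$ and $d-1$; you note this, so the step is sound). What each buys: the paper's computation is explicit and quantitative, yielding the degree and leading $t$-coefficient of the resultant while staying entirely within the determinant formalism used elsewhere in that section; your argument is shorter and more conceptual, avoids any matrix computation, and makes transparent that the only role of $t$ is its transcendence over $\FF$.
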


\begin{proof}
As $f$ satisfies Item~1 of Hypothesis (H) and has degree $d\geq 1$, so does $f+ct$.
So it suffices to verify Item~2.
Write $f=\sum_{i=0}^d c_i y^i$, where $c_i\in \FF[\mathbf{x}]$ and $c_d=1$.
Then $\pdv{(f+ct)}{y}=\sum_{i=1}^{d} (i\cdot c_i) y^{i-1}$, which has degree $d-1$ in $y$ since $d \cdot c_{d}=d\neq 0$ by the assumption about $\mathrm{char}(\FF)$. 

Let $\bar{c}_i=c_i(\mathbf{0})$ for $i=0,1,\dots,d$.
Let $h=\Res{(f+ct)(\mathbf{0},y),\pdv{f+ct}{y}(\mathbf{0},y)}$.
Then $h$ is the determinant of the following $(2d-1)\times (2d-1)$ matrix:
\[
\begin{pmatrix}
\bar{c}_0 + ct & 0 & \cdots  & 0 & \bar{c}_1 & 0 & \cdots & 0\\
\bar{c}_1 & \bar{c}_0 + ct & \cdots & 0 & 2\bar{c}_2 & \bar{c}_1 & \cdots & 0\\
\bar{c}_2 & \bar{c}_1 & \ddots & 0 & 3\bar{c}_3 & 2\bar{c}_2 & \ddots & 0\\
\vdots & \vdots & \ddots & \bar{c}_0 + ct & \vdots & \vdots & \ddots & \bar{c}_1 \\
\bar{c}_d & \bar{c}_{d-1} & \cdots & \vdots & d \bar{c}_d & (d-1) \bar{c}_{d-1} & \cdots & \vdots\\
0 & \bar{c}_d & \ddots & \vdots & 0 & d \bar{c}_d & \ddots & \vdots\\
\vdots & \vdots & \ddots & \bar{c}_{d-1} & \vdots & \vdots & \ddots & (d-1) \bar{c}_{d-1}\\
0 & 0 & \cdots & \bar{c}_d & 0 & 0 & \cdots & d \bar{c}_d
\end{pmatrix}_{\raisebox{2pt}{.}}
\]

Observe that $\deg_t h \leq d-1$, and that the coefficient of $t^{d-1}$ in $h$ is $c^{d-1}(d\bar{c}_{d})^d=c^{d-1}d^d\neq 0$ since only those entries on the diagonal contribute to this coefficient. This implies that $h\neq 0$, i.e., $f+ct$ satisfies Item~2 of Hypothesis (H).
\end{proof}

\begin{cor}\label{cor:goods}
Assume that $f\in \FF[\mathbf{x}, y]$ is a polynomial of degree $d\geq 1$ and that $\mathrm{char}(\FF)$ is either zero or greater than $d$.
Then there exists a nonzero polynomial $B\in \FF[\mathbf{x}]$ of degree at most $d$ such that for every $\mathbf{a}\in\FF^n$ satisfying $B(\mathbf{a})\neq 0$, $s_{\mathbf{a}}(f)-t$ equals a product $c\cdot g$ where $c\in\FF^\times$ and $g\in\FF(t)[\mathbf{x}, y]$ is a degree-$d$ polynomial satisfying Hypothesis~(H).
\end{cor}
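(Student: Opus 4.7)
The plan is to glue together \cref{lem:hypH-1} and \cref{lem:hypH-2}. \cref{lem:hypH-1} gives exactly the polynomial $B$ we need: a nonzero $B\in\FF[\mathbf{x}]$ of degree at most $d$ such that whenever $B(\mathbf{a})\neq 0$, the transformed polynomial $s_{\mathbf{a}}(f)$ has $\deg_y(s_{\mathbf{a}}(f))=d$ and its leading $y^d$-coefficient $c$ lies in $\FF^\times$. We will show this same $B$ works for the corollary.

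Fix such an $\mathbf{a}$ and let $c\in\FF^\times$ be the coefficient of $y^d$ in $s_{\mathbf{a}}(f)$. Define $f' := c^{-1}\cdot s_{\mathbf{a}}(f)\in\FF[\mathbf{x},y]$. Because $s_{\mathbf{a}}$ is a linear change of variables, it preserves total degree, so $\deg(s_{\mathbf{a}}(f))=\deg(f)=d$, and hence $\deg(f')=d$. Together with $\deg_y(f')=d$ and the fact that $f'$ is monic in $y$ by construction, this shows $f'$ satisfies Item~1 of Hypothesis~(H).

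Now apply \cref{lem:hypH-2} to $f'$ with the nonzero scalar $-c^{-1}\in\FF^\times$ (this is allowed since $\mathrm{char}(\FF)$ is zero or exceeds $d$). The lemma yields that
\[
g := f' + (-c^{-1})\cdot t = c^{-1}\cdot s_{\mathbf{a}}(f) - c^{-1} t
\]
is a degree-$d$ polynomial over $\FF(t)$ satisfying Hypothesis~(H). Multiplying by $c$ gives $c\cdot g = s_{\mathbf{a}}(f)-t$, which is exactly the desired factorization with $c\in\FF^\times$ and $g\in\FF[t][\mathbf{x},y]\subseteq\FF(t)[\mathbf{x},y]$.

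There is really no obstacle here beyond bookkeeping: \cref{lem:hypH-1} handles Item~1 after rescaling, and \cref{lem:hypH-2} converts Item~1 into the full Hypothesis~(H) by adding a nonzero multiple of $t$. The only subtle point is checking that, after rescaling by $c^{-1}$, we still have total degree $d$ (not just degree $d$ in $y$); this follows from $s_{\mathbf{a}}$ being a linear automorphism that preserves total degree, so nothing further needs to be verified.
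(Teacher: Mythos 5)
Your proof is correct and follows essentially the same route as the paper: apply \cref{lem:hypH-1} to obtain $B$ and the unit leading coefficient $c$, rescale $s_{\mathbf{a}}(f)$ by $c^{-1}$ to get a polynomial satisfying Item~1 of Hypothesis~(H), and then invoke \cref{lem:hypH-2} with the scalar $-c^{-1}$ to get the full Hypothesis~(H) for $g$ with $c\cdot g = s_{\mathbf{a}}(f)-t$. The paper's proof is the same argument with $\tilde{g}=c^{-1}s_{\mathbf{a}}(f)$ in place of your $f'$, so no further comment is needed.
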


\begin{proof}
Let $B$ be as in \cref{lem:hypH-1}. Consider $\mathbf{a}\in\FF^n$ satisfying $B(\mathbf{a})\neq 0$. By \cref{lem:hypH-1}, we may write $s_{\mathbf{a}}(f)=c\cdot\tilde{g}$ where $c\in\FF^\times$ and $\tilde{g}$ satisfies Item~1 of Hypothesis~(H).
Then $s_{\mathbf{a}}(f)-t=c\cdot\tilde{g}-t=c \cdot g$ where $g=\tilde{g}-c^{-1} t$.
By \cref{lem:hypH-2}, $g$ is a degree-$d$ polynomial satisfying Hypothesis~(H).
\end{proof}

Thus, by choosing good $\mathbf{a}\in\FF$ via an explicit HSG for polynomials of degree at most $d$ and performing the transformation $f\mapsto s_{\mathbf{a}}(f)$, we may assume $f-t$ satisfies Hypothesis (H).

\paragraph{Satisfying Hypothesis (H) in Small Characteristics.} 

While our final result needs $\mathrm{char}(\FF)>d(d-1)$, the assumption that $\mathrm{char}(\FF)$ is zero or large enough is not crucial for the sake of satisfying Hypothesis (H).
We now sketch how to modify the proof of \cref{lem:hypH-2} when $0<\mathrm{char}(\FF)\leq d$.

Let $p=\mathrm{char}(\FF)>0$. For our purpose, we may assume $\FF$ is a perfect field and $f$ is indecomposable over $\FF$. This implies that $f\not\in\FF[x_1^p,\dots,x_n^p,y^p]$. 
Then it is not hard to show that there exists an integer $e>0$ coprime to $p$ such that for random $\mathbf{a}\in\FF^n$, with high probability, not only is the coefficient of $y^d$ in $s_{\mathbf{a}}(f)$ nonzero, but so is the coefficient of $y^e$. Choose the largest $e$ that has this property.
After replacing $f$ by $s_{\mathbf{a}}(f)$, the polynomial $\pdv{f+ct}{y}$ in the proof of \cref{lem:hypH-2} would have degree $e-1$ instead of $d-1$ in $y$.
Then $\deg_t(h)=e-1$ and the coefficient of $t^{e-1}$ in $h$ is $c^{e-1}(e \bar{c}_e)^d$, which is nonzero iff $\bar{c}_e=c_e(\mathbf{0})$ is nonzero. The latter condition can be guaranteed with high probability by performing the substitutions $x_i\mapsto x_i+b_i$ for random $\mathbf{b}=(b_1,\dots,b_n)\in\FF^n$. Finally, it is not difficult to show that the choices of $\mathbf{a}$ and $\mathbf{b}$ can be derandomized by using an explicit HSG for polynomials of degree $O(d)$.

\section{Lecerf's Techniques}
\label{sec:lecerf}

We describe Lecerf's techniques in this section. For simplicity, our discussion is restricted to the special case where the base field is algebraically closed.

Let $\KK$ be an algebraically closed field, and let $f\in\KK[\mathbf{x}, y]$ be a polynomial of degree $d\geq 1$ satisfying Hypothesis~(H).
Define $\bar{f}:=f(\mathbf{0}, y)\in \KK[y]$.
As $\KK$ is algebraically closed and $\Res{f(\mathbf{0},y),\pdv{f}{y}(\mathbf{0},y)}\neq 0$, the univariate polynomial $\bar{f}$ factorizes into distinct linear factors
\[
\bar{f}=\prod_{i=1}^d  (y -\bar{\lambda}_i) 
\]
where $\bar{\lambda}_i\in\KK$ for $i\in [d]$.
By Hensel's lifting lemma, the above factorization of $\bar{f}$ over $\KK$ lifts to a factorization of $f$ into distinct linear factors
\[
f=\prod_{i=1}^d (y-\lambda_i),
\] 
where $\lambda_i\in \KK[[\mathbf{x}]]$ and $\lambda_i(\mathbf{0})=\bar{\lambda}_i$ for $i\in [d]$.

Now we introduce new variables $\mathbf{z}=(z_1,\dots,z_n)$ and $x$, and define $g:=f(z_1 x ,\dots, z_n x, y)\in \KK[\mathbf{z},x,y]$. Then $g$ factorizes into linear factors
\[
g=\prod_{i=1}^d (y-\lambda_i(z_1 x, \dots, z_n x))
\]
where each $\lambda_i(z_1 x, \dots, z_n x)$ lives in $\KK[\mathbf{z}][[x]]$.
For $i\in [d]$, let $g_i$ be the factor $y-\lambda_i(z_1 x, \dots, z_n x)$ of $g$, and let $\hat{g}_i$ be its cofactor $\prod_{j\in [d]\setminus\{i\}} g_j$. So $g_i,\hat{g}_i\in \KK[\mathbf{z}][[x]][y]$. 

For $h\in A[[x]][y]$ over a commutative ring $A$ and $(j,k)\in\mathbb{N}^2$, denote by $\coeff{h, x^j y^k}\in A$ the coefficient of $x^j y^k$ in $h$.
We are now ready to define the linear system $D_{\mathbf{z},\sigma}$ used in \cite{Lec06, Lec07}.

\begin{defn}[{Linear system $D_{\mathbf{z},\sigma}$ \cite{Lec06, Lec07}}]\label{defn:system}
Let $\sigma\in\NN$.
Define $D_{\mathbf{z},\sigma}$ to be the following linear system over $\KK(\mathbf{z})$ in the unknowns $\ell_1,\dots,\ell_d$:
\[
D_{\mathbf{z},\sigma}
\begin{dcases}
\sum_{i=1}^d \coeff{\hat{g}_i \pdv{g_i}{y}, x^j y^k}\cdot \ell_i=0,\quad k\leq d-1,~d\leq j+k\leq \sigma-1,\\
\sum_{i=1}^d \coeff{\hat{g}_i \pdv{g_i}{x}, x^j y^k}\cdot \ell_i=0,\quad k\leq d-1,~j\leq \sigma-2,~d\leq j+k\leq \sigma-1.
\end{dcases}
\]
\end{defn}

We have the following easy lemma.
\begin{lem}\label{lem:degree-bound}
For $(j,k)\in\NN^2$,
$\coeff{\hat{g}_i \pdv{g_i}{x}, x^j y^k},\coeff{\hat{g}_i \pdv{g_i}{y}, x^j y^k}\in\KK[\mathbf{z}]$ are polynomials of degree at most $j+1$ and $j$ respectively.
\end{lem}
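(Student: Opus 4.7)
The plan is to track the $\mathbf{z}$-degree of coefficients through every step, exploiting the fact that the substitution $x_i\mapsto z_i x$ couples the $x$-grading with the $\mathbf{z}$-grading. Writing $\lambda_i=\sum_{\mathbf{e}\in\NN^n} a_{i,\mathbf{e}}\,\mathbf{x}^{\mathbf{e}}\in\KK[[\mathbf{x}]]$, the substitution yields
\[
\lambda_i(z_1 x,\dots,z_n x)=\sum_{\mathbf{e}\in\NN^n} a_{i,\mathbf{e}}\,z_1^{e_1}\cdots z_n^{e_n}\cdot x^{|\mathbf{e}|}\in\KK[\mathbf{z}][[x]],
\]
so the coefficient of $x^m$ in this series is a homogeneous polynomial in $\mathbf{z}$ of degree exactly $m$. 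Consequently, for $g_i=y-\lambda_i(z_1 x,\dots,z_n x)$, each coefficient $\coeff{g_i,x^j y^k}\in\KK[\mathbf{z}]$ is either zero or homogeneous of degree exactly $j$; call this property $(\dagger)$.

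I would then observe that $(\dagger)$ is preserved under multiplication: if $h,h'\in\KK[\mathbf{z}][[x]][y]$ both satisfy $(\dagger)$, the convolution
\[
\coeff{hh',x^jy^k}=\sum_{\substack{j_1+j_2=j\\k_1+k_2=k}}\coeff{h,x^{j_1}y^{k_1}}\cdot\coeff{h',x^{j_2}y^{k_2}}
\]
expresses the coefficient as a sum of products of homogeneous elements of $\mathbf{z}$-degrees $j_1$ and $j_2$, hence of degree $j$. Applying this inductively to $\hat{g}_i=\prod_{\ell\neq i}g_\ell$ gives $\deg_{\mathbf{z}}\coeff{\hat{g}_i,x^jy^k}\leq j$. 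Since $g_i$ is monic and linear in $y$, we have $\pdv{g_i}{y}=1$, so $\hat{g}_i\pdv{g_i}{y}=\hat{g}_i$, which establishes the first bound.

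For the second bound, differentiation by $x$ sends the $x^{m}$-coefficient of $g_i$, homogeneous of $\mathbf{z}$-degree $m$, to the $x^{m-1}$-coefficient of $\pdv{g_i}{x}$, still homogeneous of $\mathbf{z}$-degree $m$. Hence $\coeff{\pdv{g_i}{x},x^jy^k}$ is either zero or homogeneous of $\mathbf{z}$-degree $j+1$. Convolving against $\hat{g}_i$ (which satisfies $(\dagger)$) exactly as above now yields terms of $\mathbf{z}$-degree $j_1+(j_2+1)=j+1$, giving $\deg_{\mathbf{z}}\coeff{\hat{g}_i\pdv{g_i}{x},x^jy^k}\leq j+1$. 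I do not foresee any substantive obstacle: the whole statement is bookkeeping built on the single observation that $x_i\mapsto z_i x$ makes the $x$-grading coincide with the $\mathbf{z}$-grading, that this property is closed under multiplication, that $\partial/\partial x$ shifts it by exactly $+1$, and that $\partial/\partial y$ acts trivially on $g_i$ because $g_i$ is monic linear in $y$.
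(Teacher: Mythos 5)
Your proposal is correct and follows essentially the same route as the paper: bound the $\mathbf{z}$-degree of $\coeff{g_i,x^jy^k}$ by $j$ via the substitution $x_i\mapsto z_ix$, note that $\pdv{}{x}$ shifts this bound to $j+1$ while $\pdv{g_i}{y}=1$, and propagate through the product $\hat{g}_i$ using the coefficient convolution formula. Your observation of exact homogeneity is a slight refinement of the paper's ``degree at most $j$'' bookkeeping, but the argument is the same.
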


\begin{proof}
Consider arbitrary $i\in [d]$ and $(j,k)\in\NN^2$.
As $g_i=y-\lambda_i(z_1 x, \dots, z_n x)$ and $j\geq 0$, only terms of degree at most $j$ in $z_1,\dots,z_n$ contribute to the coefficient of $x^j y^k$ in $g_i$. 
Then, as the operator $\pdv{}{x}$ is linear and sends $x^u y^{v}$ to $u x^{u-1} y^{v}$ for all $u, v\in\NN$, one can see that only terms of degree at most $j+1$ in $z_1,\dots,z_n$ contribute to the coefficient of $x^j y^k$ in $\pdv{g_i}{x}$. 
Also, $\pdv{g_i}{y}=1$ by definition.

For a collection of polynomials $h_1, \dots, h_s\in \KK[\mathbf{z}][[x]][y]$ and $h=\prod_{i=1}^s h_i$, we have 
\begin{equation}\label{eq:coeff-product}
\coeff{h,  x^j y^k}=\sum_{\substack{j_1,\dots,j_s,k_1,\dots,k_s\in \NN \\ \sum_i j_i=j, \sum_i k_i=k}}
\prod_{i=1}^s \coeff{h_i, x^{j_i}y^{k_i}}.
\end{equation}
We already know $\deg\left(\coeff{g_i, x^j y^k}\right)\leq j$, $\deg\left(\coeff{\pdv{g_i}{x}, x^j y^k}\right)\leq j+1$, and $ \deg\left(\coeff{\pdv{g_i}{y}, x^j y^k}\right)=0$ for $i\in[d]$ and $(j,k)\in\NN^2$ by the above discussion.
Choosing $(h_1,\dots,h_s)$ to be $(g_1, \dots, g_{i-1}, g_{i+1},\dots g_d, \pdv{g_i}{x})$ and $(g_1, \dots, g_{i-1}, g_{i+1},\dots g_d, \pdv{g_i}{y})$ respectively and applying \eqref{eq:coeff-product} proves the claim.
\end{proof}

For $\mathbf{a}=(a_1,\dots,a_n)\in\KK^n$, we can assign $a_1\dots,a_n$ to $z_1,\dots,z_n$ respectively in the polynomials $\coeff{\hat{g}_i \pdv{g_i}{x}, x^j y^k},\coeff{\hat{g}_i \pdv{g_i}{y}, x^j y^k}\in\KK[\mathbf{z}]$. This yields a linear system over $\KK$, called the \emph{specialization} of $D_{\mathbf{z},\sigma}$ at $\mathbf{a}$ and denoted by $D_{\mathbf{a},\sigma}$.

\begin{defn}[Specialization]
For $\sigma\in\NN$ and $\mathbf{a}=(a_1,\dots,a_n)\in\KK^n$, define $D_{\mathbf{a},\sigma}$ to be the following linear system over $\KK$ in the unknowns $\ell_1,\dots,\ell_d$:
\[
D_{\mathbf{a},\sigma}
\begin{dcases}
\sum_{i=1}^d \coeff{\hat{g}_i \pdv{g_i}{y}, x^j y^k}(\mathbf{a})\cdot \ell_i=0,\quad k\leq d-1,~d\leq j+k\leq \sigma-1,\\
\sum_{i=1}^d \coeff{\hat{g}_i \pdv{g_i}{x}, x^j y^k}(\mathbf{a})\cdot \ell_i=0,\quad k\leq d-1,~j\leq \sigma-2,~d\leq j+k\leq \sigma-1.
\end{dcases}
\]
\end{defn}

For $S\subseteq [d]$, define $\delta_S=(\delta_{S,1},\dots,\delta_{S,d})\in \KK^d$ by
\[
\delta_{S,i}=\begin{cases}
1 & i\in S,\\
0 & i\not\in S.
\end{cases}
\]

For every factor $\tilde{f}$ of $f$, we may associate a set $S\subseteq [d]$ such that $\tilde{f}=\prod_{i\in S} (y-\lambda_i)$, i.e., $S$ is the set of indices $i\in [d]$ such that $y-\lambda_i$ divides $\tilde{f}$.
Then irreducible factors $f_1,\dots,f_r$ of $f$ over $\KK$ are then associated with sets $S_1,\dots,S_r\subseteq [d]$, which form a partition of $[d]$.
In \cite{Lec06, Lec07}, Lecerf proved that, when $\sigma$ is large enough, the solution space of $D_{\mathbf{z},\sigma}$ is exactly spanned by the vectors $\delta_{S_1},\dots,\delta_{S_r}$, and a similar statement holds for the specializations $D_{\mathbf{a},\sigma}$. We state Lecerf's results formally as the following theorem.

\begin{thm}[{\cite{Lec06, Lec07}}]\label{thm:lecerf}
Assume $\mathrm{char}(\KK)$ is zero or greater than $d(d-1)$. Let $\sigma\geq 2d$.
Let $f\in\KK[\mathbf{x}, y]$ be a polynomial of degree $d\geq 1$ satisfying Hypothesis~(H). Then:
\begin{enumerate}
\item Suppose $f=\prod_{i=1}^r f_i$ is the factorization of $f$ into its irreducible factors over $\KK$. For $i\in [r]$, let $S_i$ be the set of indices $j\in [d]$ such that $y-\lambda_j$ divides $f_i$. Then $\delta_{S_1},\dots,\delta_{S_r}$ form a basis of the solution space of $D_{\mathbf{z},\sigma}$.
\item Let $\mathbf{a}=(a_1,\dots,a_n)\in\KK^n$ and $f_{\mathbf{a}}=f(a_1 x, \dots,a_n x, y)\in\KK[x,y]$.
Suppose $f_{\mathbf{a}}=\prod_{i=1}^s f_{\mathbf{a}, i}$ is the factorization of $f_{\mathbf{a}}$ into its irreducible factors over $\KK$. For $i\in [s]$, let $S_{\mathbf{a}, i}$ be the set of indices $j\in [d]$ such that $y-\lambda_j(a_1 x, \dots,a_n x)$ divides $f_{\mathbf{a},i}$. Then $\delta_{S_{\mathbf{a},1}},\dots,\delta_{S_{\mathbf{a},s}}$ form a basis of the solution space of $D_{\mathbf{a},\sigma}$.
\end{enumerate}
\end{thm}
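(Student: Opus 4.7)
I would first verify that $\delta_{S_k}$ is a solution to $D_{\mathbf{z},\sigma}$ for each irreducible factor $f_k$ of $f$. Set $\tilde{g}_k:=f_k(z_1 x,\dots,z_n x,y)\in\KK[\mathbf{z}][x,y]$. By the uniqueness clause of Hensel's lifting lemma (\cref{lem:hensel}) applied to $f_k$ over $\KK(\mathbf{z})[[x]][y]$, the roots of $\tilde{g}_k$ are precisely the lifts $\lambda_j$ with $j\in S_k$, so $\tilde{g}_k=\prod_{j\in S_k}g_j$. The Leibniz rule then gives the identities
\[
\sum_{j\in S_k}\hat{g}_j\pdv{g_j}{y}=\pdv{\tilde{g}_k}{y}\cdot\frac{g}{\tilde{g}_k},\qquad \sum_{j\in S_k}\hat{g}_j\pdv{g_j}{x}=\pdv{\tilde{g}_k}{x}\cdot\frac{g}{\tilde{g}_k}.
\]
The substitution $x_i\mapsto z_i x$ preserves total $(x,y)$-degree, so $\tilde{g}_k$ and $g/\tilde{g}_k$ are polynomials in $\KK[\mathbf{z}][x,y]$ of total $(x,y)$-degree at most $\deg(f_k)$ and $d-\deg(f_k)$ respectively. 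Hence both products above have total $(x,y)$-degree at most $d-1$, forcing $\coeff{\cdot,x^j y^k}=0$ whenever $j+k\geq d$, so $\delta_{S_k}$ satisfies $D_{\mathbf{z},\sigma}$. Since the $S_k$ partition $[d]$ into nonempty blocks, the vectors $\delta_{S_1},\dots,\delta_{S_r}$ are linearly independent, giving a lower bound of $r$ on the dimension of the solution space.

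\textbf{Dimension upper bound.} For a general solution $\ell$, I would study the rational closed $1$-form
\[
\omega:=\sum_{i=1}^d \ell_i\,\frac{dg_i}{g_i}=\frac{u\, dy+v\, dx}{g},\qquad u=\sum_i\ell_i\hat{g}_i,\ v=-\sum_i\ell_i\pdv{\lambda_i}{x}\hat{g}_i,
\]
using $\pdv{g_i}{y}=1$ and $\pdv{g_i}{x}=-\pdv{\lambda_i}{x}$. The system $D_{\mathbf{z},\sigma}$ is exactly the statement that $u,v\in\KK(\mathbf{z})[[x]][y]$, which already have $y$-degree $<d$, carry no monomials $x^j y^k$ with $d\leq j+k\leq\sigma-1$; equivalently, their truncations modulo total $(x,y)$-degree $\sigma$ are genuine polynomials of total degree $<d$. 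This is the classical Ruppert--Gao framework for bivariate factorization (\cite{Ruppert86,Ruppert99,Gao2003}), adapted to a truncated/power-series setting: under the hypotheses $\sigma\geq 2d$ and $\mathrm{char}(\KK)>d(d-1)$, the $\KK(\mathbf{z})$-space of closed forms whose numerator is ``polynomial of total degree $<d$ modulo order $\sigma$'' has dimension exactly equal to the number of irreducible factors of $g$ over $\KK(\mathbf{z})$, and is spanned by the genuine logarithmic differentials $d\tilde{g}_k/\tilde{g}_k$. Reading off the partial-fraction coefficients of $\omega$ in this basis expresses $\ell$ as a linear combination of the $\delta_{S_k}$.

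\textbf{Part~2 and the main obstacle.} Part~2 is handled by running the same argument with the bivariate polynomial $f_\mathbf{a}(x,y):=f(a_1x,\dots,a_nx,y)$ in place of $f$: the specialized system $D_{\mathbf{a},\sigma}$ is structurally the $D$-system associated with the linear factorization of $f_\mathbf{a}$ in $\KK[[x]][y]$, with the partition $S_{\mathbf{a},1},\dots,S_{\mathbf{a},s}$ coming from its irreducible factorization over $\KK$. The only preparatory observation is that $f_\mathbf{a}$ still satisfies Hypothesis~(H) as a bivariate polynomial, which follows from Hypothesis~(H) for $f$ together with $f_\mathbf{a}(0,y)=f(\mathbf{0},y)$ and the fact that $f$ is monic in $y$; the rest of the argument proceeds verbatim in the (simpler) bivariate setting. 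I expect the main obstacle throughout to be the Ruppert--Gao dimension count in the previous paragraph: the classical statement concerns polynomial closed forms, and adapting it to the truncated form used in $D_{\mathbf{z},\sigma}$ requires showing that $\sigma\geq 2d$ suffices to preclude spurious high-order cancellations, and that the assumption $\mathrm{char}(\KK)>d(d-1)$ is precisely what is needed to keep all denominators and derivatives appearing along the way non-degenerate.
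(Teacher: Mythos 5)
The paper does not actually prove \cref{thm:lecerf}: it imports it wholesale, pointing to \cite[Lemma~1]{Lec07} for Item~1 and to \cite[Theorem~1 and Lemma~4]{Lec06} for Item~2, so there is no internal proof to compare against line by line. Measured as a standalone argument, your easy direction is fine: the identity $\tilde{g}_k=\prod_{j\in S_k}g_j$ via uniqueness in \cref{lem:hensel}, the logarithmic-derivative computation $\sum_{j\in S_k}\hat{g}_j\pdv{g_j}{y}=\bigl(g/\tilde{g}_k\bigr)\pdv{\tilde{g}_k}{y}$ (and its $x$-analogue), and the degree count showing these products have total $(x,y)$-degree at most $d-1$ correctly establish that each $\delta_{S_k}$ satisfies $D_{\mathbf{z},\sigma}$, and linear independence is immediate from disjointness of the $S_k$.

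The genuine gap is the spanning direction, which is the entire content of the theorem, and your proposal does not prove it: you reduce it to the assertion that ``the $\KK(\mathbf{z})$-space of closed forms whose numerator is polynomial of total degree $<d$ modulo order $\sigma$ has dimension exactly the number of irreducible factors,'' attributed to the classical Ruppert--Gao framework. But the classical statements of Ruppert and Gao concern \emph{bivariate polynomial} solutions of a closedness equation with genuine degree bounds, whereas here the unknowns are coefficients of Hensel-lifted series factors $g_i\in\KK[\mathbf{z}][[x]][y]$ and only the finitely many coefficients with $d\leq j+k\leq\sigma-1$ are constrained; showing that these truncated constraints with $\sigma=2d$ already force every solution into the span of the $\delta_{S_i}$ (and that $\mathrm{char}(\KK)>d(d-1)$ suffices), both generically and after specialization $\mathbf{z}\mapsto\mathbf{a}$ as in Item~2, is precisely Lecerf's refinement and is nontrivial --- note in particular that specialization can only enlarge the solution space, so Item~2 needs its own argument rather than following ``verbatim.'' You flag this adaptation yourself as the main obstacle, which is the right instinct, but as written the proposal either silently assumes the theorem it is meant to prove or must fall back on citing \cite{Lec06,Lec07}, which is exactly what the paper does.
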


The first item of \cref{thm:lecerf} is explicitly stated as \cite[Lemma~1]{Lec07}. The second item follows from \cite[Theorem~1 and Lemma~4]{Lec06}. For a detailed analysis of the linear systems $D_{\mathbf{z},\sigma}$ and $D_{\mathbf{a},\sigma}$, and for a conceptual interpretation of these linear systems in terms of the closedness condition of differential 1-forms, we refer the reader to \cite{Lec06, Lec07}, as well as to the earlier paper of Gao \cite{Gao2003}.

\paragraph{Bertinian Good/Bad Points.}

The classical Bertini irreducibility theorem \cite{Sha94} states, among other things, that over an algebraically closed field $\KK$, the intersection of an irreducible variety with a plane in general position is still irreducible.
This motivates the following definition:

\begin{defn}[Bertinian good/bad points \cite{Lec07}]
Let $f\in\KK[\mathbf{x}, y]$ be a non-constant polynomial satisfying Hypothesis~(H). We say $\mathbf{a}=(a_1,\dots,a_n)\in\KK^n$ is a \emph{Bertinian good point} for $f$ if for every irreducible factor $\tilde{f}$ of $f$ over $\KK$, the bivariate polynomial $\tilde{f}_{\mathbf{a}}=\tilde{f}(a_1 x,\dots,a_n x, y)$ is also irreducible over $\KK$.
We say $\mathbf{a}=(a_1,\dots,a_n)\in\KK^n$ is a \emph{Bertinian bad point} for $f$ if it is not a Bertinian good point for $f$.  
\end{defn}

Lecerf \cite[Theorem~6]{Lec07} proved that given $f$, there exists a nonzero polynomial $Q\in \KK[z_1,\dots,z_n]$ of degree at most $(d-1)(2d-1)$ that vanishes at all Bertinian bad points for $f$, where $d=\deg(f)$. 
Let $M$ be the matrix representing the linear system $D_{\mathbf{z},\sigma}$.
Lecerf's proof can be sketched as follows: By \cref{thm:lecerf}, the solution space of $D_{\mathbf{a},\sigma}$ contains that of $D_{\mathbf{z},\sigma}$, and $\mathbf{a}$ is Bertinian good as long as the two are equal. Thus, we may choose $Q$ to be the determinant of the largest nonsingular submatrix of $M$. This is because for such $Q$, if $Q$ does not vanish at $\mathbf{a}$, then $D_{\mathbf{z},\sigma}$ and $D_{\mathbf{a},\sigma}$ have the same rank, and hence their solution spaces must be equal. The bound $(d-1)(2d-1)$ on the degree of $Q$ follows from \cref{lem:degree-bound}. 

In \cite{Lec07}, Lecerf also demonstrated that the degree bound $(d-1)(2d-1)$ is asymptotically tight by providing an example for which a degree of $\Omega(d^2)$ of the polynomial $Q$ is necessary. However, our next lemma states that, perhaps surprisingly, the degree bound can be improved to $2d-1$ if we allow the use of the zero loci of \emph{multiple} polynomials to cover the Bertinian bad points for $f$. For simplicity, we state the lemma in the special case where $f$ is irreducible, which suffices for our purpose.

\begin{lem}\label{lem:Bertinian}
Assume $\mathrm{char}(\KK)$ is zero or greater than $d(d-1)$.
Let $f\in\KK[\mathbf{x}, y]$ be a irreducible polynomial over $\KK$ of degree $d\geq 1$ satisfying Hypothesis~(H). Let $m=2^{d-1}-1$. Then there exist nonzero polynomials $Q_1,\dots,Q_m\in \KK[\mathbf{z}]=\KK[z_1,\dots,z_n]$ of degree at most $2d-1$ such that for every Bertinian bad point $\mathbf{a}\in\KK^n$ for $f$, at least one polynomial $Q_i$ vanishes at $\mathbf{a}$.
\end{lem}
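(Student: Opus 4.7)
The plan is to work directly with the linear system $D_{\mathbf{z},\sigma}$ of \cref{defn:system} with $\sigma=2d$. By \cref{lem:degree-bound} combined with the index constraints in \cref{defn:system} ($j\leq\sigma-1$ for rows of the first type and $j+1\leq\sigma-1$ for rows of the second type), every coefficient appearing in $D_{\mathbf{z},\sigma}$ is an element of $\KK[\mathbf{z}]$ of degree at most $2d-1$. My $Q_i$'s will be chosen as subset-sum combinations of these coefficients, rather than as the $\Theta(d)\times\Theta(d)$ determinantal minor Lecerf uses.

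For every nonempty proper subset $S\subsetneq[d]$ I would substitute $\ell_i=\delta_{S,i}$ into each row of $D_{\mathbf{z},\sigma}$, producing polynomials of the form $\sum_{i\in S}\coeff{\hat{g}_i\pdv{g_i}{y},x^jy^k}(\mathbf{z})$ and $\sum_{i\in S}\coeff{\hat{g}_i\pdv{g_i}{x},x^jy^k}(\mathbf{z})$ in $\KK[\mathbf{z}]$, each of degree at most $2d-1$. Since $f$ is irreducible, \cref{thm:lecerf} item~1 tells us that the solution space of $D_{\mathbf{z},\sigma}$ over $\KK(\mathbf{z})$ is one-dimensional, spanned by $\delta_{[d]}=\mathbf{1}$, so $\delta_S$ is not a solution; hence at least one of the subset-sum polynomials above is nonzero. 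Call any such nonzero polynomial $R_S$.

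The key observation that cuts the witness count from $2^d-2$ down to $2^{d-1}-1$ is that, because $\mathbf{1}$ solves $D_{\mathbf{z},\sigma}$ identically, the row-sum $\sum_{i=1}^d C_i^{(j,k)}(\mathbf{z})$ of coefficients vanishes as a polynomial, whence $\sum_{i\in S}C_i^{(j,k)}(\mathbf{z})=-\sum_{i\in[d]\setminus S}C_i^{(j,k)}(\mathbf{z})$ for every row. Thus plugging $\delta_S$ and $\delta_{[d]\setminus S}$ produces the same polynomial up to sign, and the two sets may be treated as a single pair. There are $(2^d-2)/2=m$ unordered pairs $\{S,[d]\setminus S\}$; I would select one representative from each pair and let $Q_1,\dots,Q_m$ be the corresponding $R_S$.

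It remains to verify that some $Q_i$ vanishes at every Bertinian bad $\mathbf{a}\in\KK^n$. Since $f$ is irreducible and $\mathbf{a}$ is Bertinian bad, the specialization $f_{\mathbf{a}}$ is reducible, so \cref{thm:lecerf} item~2 produces a partition $[d]=S_{\mathbf{a},1}\sqcup\cdots\sqcup S_{\mathbf{a},s}$ with $s\geq 2$; pick any $S_{\mathbf{a},j_0}$, which must then be nonempty and proper. Then $\delta_{S_{\mathbf{a},j_0}}$ lies in the solution space of $D_{\mathbf{a},\sigma}$, and since $\mathbf{1}=\sum_i\delta_{S_{\mathbf{a},i}}$ does as well, so does $\delta_{[d]\setminus S_{\mathbf{a},j_0}}$. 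Whichever element of $\{S_{\mathbf{a},j_0},[d]\setminus S_{\mathbf{a},j_0}\}$ was chosen as the representative of its pair, the associated $Q_i$ is (by construction) a row-polynomial of $D_{\mathbf{z},\sigma}$ satisfied at $\mathbf{a}$ by the corresponding $\delta$-vector, and therefore $Q_i(\mathbf{a})=0$. The only non-routine step is the ``pair-up'' that exploits Lecerf's $0/1$ structure; everything else is immediate from \cref{thm:lecerf} and the degree bound of \cref{lem:degree-bound}.
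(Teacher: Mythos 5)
Your proposal is correct and is essentially the paper's proof: both take $\sigma=2d$, use \cref{lem:degree-bound} for the degree bound $2d-1$, define each $Q_i$ as a single row of $D_{\mathbf{z},\sigma}$ paired with a $0/1$ vector $\delta_S$ that \cref{thm:lecerf} (item~1, irreducibility) certifies is not a generic solution, and conclude at a Bertinian bad point via item~2 that $\delta_S$ solves the specialization $D_{\mathbf{a},\sigma}$, forcing $Q_i(\mathbf{a})=0$. The only cosmetic difference is the choice of $2^{d-1}-1$ representatives: the paper enumerates proper subsets containing $1$ (using the irreducible factor of $f_{\mathbf{a}}$ divisible by $y-\lambda_1(a_1x,\dots,a_nx)$), while you pair each $S$ with its complement and use that $\mathbf{1}$ is always a solution -- both are valid and yield the same count.
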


\begin{proof}
Let $\sigma=2d$.
Let $N$ be the number of equations in $D_{\mathbf{z},\sigma}$. Let $M$ be the $N\times d$ matrix over $\FF(\mathbf{z})$ representing the linear system $D_{\mathbf{z},\sigma}$.
Note that by \cref{defn:system}, the entries of $M$ are of the form $\coeff{\hat{g}_i \pdv{g_i}{x}, x^j y^k}$ with $j\leq \sigma-2$ or $\coeff{\hat{g}_i \pdv{g_i}{y}, x^j y^k}$ with $j\leq \sigma-1$.
By \cref{lem:degree-bound} and the fact that $\sigma=2d$, the entries of $M$ are polynomials in $\KK[\mathbf{z}]$ of degree at most $2d-1$.

There are exactly $m=2^{d-1}-1$ proper subsets of $[d]$ containing $1$. Let $S_1,\dots,S_m$ be an enumeration of them.
Consider $i\in [m]$. As $f$ is irreducible, by \cref{thm:lecerf}, $\delta_{S_i}$ is not in the solution space of $D_{\mathbf{z},\sigma}$.
So we can fix a row $\mathbf{r}_i=(r_{i,1},\dots,r_{i,d})$ of $M$ such that the inner product of $\mathbf{r}_i$ and $\delta_{S_i}$ is nonzero, i.e., $\sum_{j=1}^d r_{i,j} \delta_{S_i, j}\neq 0$. Let $Q_i=\sum_{j=1}^d r_{i,j} \delta_{S_i, j}$, which is a nonzero polynomial in $\KK[\mathbf{z}]$ of degree at most $2d-1$. Choose $Q_i$ in this way for each $i=1,\dots, m$. 

Now let $\mathbf{a}$ be a Bertinian bad point for $f$. Then $f_{\mathbf{a}}=f(a_1 x, \dots,a_n x, y)$ factorizes into more than one irreducible factor over $\KK$. Let $\tilde{f}_{\mathbf{a}}$ be the irreducible factor of $f_{\mathbf{a}}$ divisible by $y-\lambda_1(a_1 x, \dots,a_n x)$. Let $S$ be the set of $j\in [d]$ such that $\tilde{f}_{\mathbf{a}}$ is divisible by $y-\lambda_j(a_1 x, \dots,a_n x)$. Then $S$ is a proper subset of $[d]$ containing $1$. So $S=S_i$ for some $i\in [m]$. By \cref{thm:lecerf}, $\delta_{S_i}$ is in the solution space of $D_{\mathbf{a},\sigma}$. As $D_{\mathbf{a},\sigma}$ is the specialization of $D_{\mathbf{z},\sigma}$ at $\mathbf{a}$, the vector $(r_{i,1}(\mathbf{a}),\dots,r_{i,d}(\mathbf{a}))$ is a row of the matrix representing $D_{\mathbf{a},\sigma}$. So $\sum_{j=1}^d r_{i,j}(\mathbf{a}) \delta_{S_i, j}=0$, i.e., $Q_i(\mathbf{a})=0$.
\end{proof}

\paragraph{The Number of Low-Degree Polynomials Needed.}

It is an intriguing mathematical question to us how many low-degree polynomials are needed to cover the Bertinian bad points for $f$. We now formalize this question. 

\begin{defn}\label{defn:N}
Let $\KK$ be an algebraically closed field.
For positive integers $d$ and $D$, define $N(d, D, \KK)$ to be the smallest $N\in\NN$ such that the following holds:
Let $f\in\KK[\mathbf{x}, y]$ be an irreducible polynomial of degree at most $d$ over $\KK$ satisfying Hypothesis~(H). 
Then there exist $N$ nonzero polynomials in $\KK[\mathbf{z}]$ of degree at most $D$ such that the union of the zero loci of these polynomials contains all Bertinian bad points for $f$ in $\KK^n$. 

If such $N$ does not exist, define $N(d,D,\KK)=\infty$.
\end{defn}

In our application, it suffices to consider polynomials of the special form $f+c\cdot t$, where $c\in\FF^\times$, $f\in\FF[\mathbf{x},y]$ and $\KK=\overline{\FF(t)}$. Moreover, by performing a variable substitution $t\mapsto -c^{-1} t$, we may assume $c=-1$. This motivates us to introduce the following variant of \cref{defn:N}:
 
\begin{defn}
Let $\FF$ be a field.
For positive integers $d$ and $D$, define $N^*(d, D, \FF)$ to be the smallest $N\in\NN$ such that the following holds:
Let $f\in\FF[\mathbf{x}, y]$ be a polynomial of degree at most $d$ such that $f-t$ is an irreducible polynomial over $\overline{\FF(t)}$ satisfying Hypothesis~(H). 
Then there exist $N$ nonzero polynomials in $\overline{\FF(t)}[\mathbf{z}]$ of degree at most $D$ such that the union of the zero loci of these polynomials contains all Bertinian bad points for $f-t$ in $\overline{\FF}^n$. 

If such $N$ does not exist, define $N^*(d, D, \FF)=\infty$.
\end{defn}

Lecerf's result \cite[Theorem~6]{Lec07} can be interpreted as the statement that when $\mathrm{char}(\KK)$ is zero or greater than $d(d-1)$, it holds that 
\[
N(d, D, \KK)=1 \text{ for } D\geq (d-1)(2d-1).
\]
Our \cref{lem:Bertinian} states that under the same condition, we have 
\[
N(d, D, \KK)\leq 2^{d-1}-1 \text{ for } D\geq 2d-1.
\]
In \cite{Lec07}, Lecerf gave an example showing that the degree bound $O(d^2)$ for the smallest $D$ satisfying $N(d, D, \KK)=1$ is asymptotically tight.\footnote{See the example before Theorem~6 in \cite{Lec07}.} As one can always combine the $N(d, D, \KK)$ polynomials of degree at most $D$ into a single polynomial of degree at most $N(d, D, \KK)\cdot D$ by taking their product, this implies $N(d, D, \KK)\cdot D=\Omega(d^2)$, i.e., $N(d, D, \KK)=\Omega(d^2/D)$. 

\begin{question}
\label{question:N}
Give improved upper bounds (or lower bounds) on $N(d, D, \KK)$ and $N^*(d, D, \FF)$, at least when the characteristic of $\KK$ or $\FF$ is zero or large enough.
\end{question}

By definition, $N^*(d, D, \FF)\leq N(d, D, \overline{\FF(t)})$. A subexponential upper bound on $N^*(d, D, \FF)$ for $D=O(d)$ will improve the required field size in \cref{thm:main}.

Finally, it might be possible to exploit some extra structure to derive better bounds on $N^*(d, D, \FF)$ than those obtained for $N(d, D, \KK)$.
For example, if we modify the definition of $N^*(d, D, \FF)$ by only considering those polynomials $f$ of \emph{prime} degree, then $N^*(d, 2d-1, \FF)\leq 1$. This is because if $f_{\mathbf{a}}-t$ is reducible over $\overline{\FF(t)}$, then $f_{\mathbf{a}}$ is decomposable over $\overline{\FF}$ by \cref{lem:indec2irred}. 
But as $\deg(f)$ is prime, $f_{\mathbf{a}}$ must be of the form $g(h)$ with $\deg(g)=\deg(f)$ and $\deg h=1$. This in turn implies that $f_{\mathbf{a}}-t$ factorizes into $\deg(f)$ linear factors over $\overline{\FF_q(t)}$. 
In \cref{thm:prime-degree}, we use this idea to show that the required field size can be improved to $O(d^4/\eps^2)$ if we only want to fool polynomials whose degrees are prime and at most $d$.

\section{Proofs of the Main Theorems} 

In this section, we present our PRG construction and prove the main theorems. 

Let $n$ and $d$ be positive integers. Let $\FF_q$ be a finite field of characteristic at least $d(d-1)+1$.
We now present the construction of our PRG 
\[
G: S\to \FF_q^{n+1}
\]
for polynomials $f\in\FF_q[\mathbf{x},y]=\FF_q[x_1,\dots,x_n,y]$ of degree at most $d$.
To simplify our notation, these polynomials are assumed to be $(n+1)$-variate rather than $n$-variate.

\begin{construction}\label{construction:prg}
The construction is as follows:

\begin{itemize}
\item Let $H: T\to \FF_q^{n}$ be an explicit HSG for $n$-variate polynomials of degree at most $2d-1$ over $\FF_q$ with density $1-\delta$ and seed length $\log |T|=O(d\log n+\log(1/\delta))$, where $\delta=C_0 (2d-1)/q$ and $C_0>0$ is an absolute constant.  For $i\in [n]$ and $s\in T$, denote the $i$-th coordinate of $H(s)$ by $H(s)_i$.
The existence of $H$ is guaranteed by \cref{thm:optimalHSG}.
\item Let $S=T\times T\times \FF_q\times \FF_q$. Define $G:S\to \FF_q^{n+1}$ by
\[
G(r, s, u, v)=(H(s)_1\cdot u+ H(r)_1\cdot v, \dots, H(s)_n\cdot u+H(r)_n\cdot v, v). 
\]
\end{itemize}
\end{construction}

In other words, we use random $(r,s)\in T\times T$ to pick a plane in $\FF_q^{n+1}$, and use random $(u,v)\in\FF_q\times \FF_q$ to pick a point on the plane. The following lemma states that with high probability, a given indecomposable polynomial $f\in\FF_q[\mathbf{x},y]$ remains indecomposable when restricted to the plane.

\begin{lem}
\label{lem:reduction}
Let $f\in \FF_q[\mathbf{x},y]$ be an indecomposable polynomial of degree at most $d$ over $\FF_q$. 
Let $(r,s)$ be a random element of $T\times T$.
Let $\mathbf{a}=(a_1,\dots,a_n)=H(r)$ and $\mathbf{b}=(b_1,\dots,b_n)=H(s)$.
Finally, let $F=f(b_1 x+a_1y,\dots,b_n x+a_n y, y)\in\FF_q[x,y]$. Then
\[
\Pr\left[F~\text{is indecomposable over}~\FF_q\right]\geq 1-2^{d-1}\delta.
\]
\end{lem}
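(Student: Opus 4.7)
The strategy is to translate indecomposability into irreducibility. Combining \cref{lem:base-change} with \cref{lem:indec2irred}, a polynomial $h \in \FF_q[\mathbf{x}, y]$ is indecomposable over $\FF_q$ if and only if $h - t$ is irreducible over $\KK := \overline{\FF_q(t)}$. So the hypothesis supplies that $f - t$ is irreducible over $\KK$, and the goal becomes showing that $F - t$ is irreducible over $\KK$ except with probability at most $2^{d-1}\delta$. The crucial rewriting is
\[
F - t = \bigl(s_{\mathbf{a}}(f) - t\bigr)(b_1 x, \ldots, b_n x, y),
\]
so the construction first applies the linear automorphism $s_{\mathbf{a}}$ (with $\mathbf{a} = H(r)$) and then the Bertini-style specialization $x_i \mapsto b_i x$ (with $\mathbf{b} = H(s)$) to $f$; the former is the preprocessing of \cref{sec:H}, and the latter is exactly the setup governed by Lecerf's theory in \cref{sec:lecerf}.

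The first step handles Hypothesis~(H) via the randomness of $r$. By \cref{cor:goods}, there is a nonzero polynomial $B \in \FF_q[\mathbf{z}]$ of degree at most $d$ such that whenever $B(\mathbf{a}) \neq 0$, one may write $s_{\mathbf{a}}(f) - t = c \cdot g$ with $c \in \FF_q^\times$ and $g \in \FF_q(t)[\mathbf{x}, y]$ a degree-$d$ polynomial satisfying Hypothesis~(H). Since $s_{\mathbf{a}}$ is an $\FF_q$-algebra automorphism, irreducibility of $f - t$ over $\KK$ transfers to irreducibility of $g$ over $\KK$. Because $\deg B \leq d \leq 2d - 1$, the HSG guarantee of $H$ gives $\Pr_r[B(\mathbf{a}) = 0] \leq \delta$.

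Conditional on $\mathbf{a}$ being good, the second step uses the randomness of $s$ to ensure $\mathbf{b}$ is a Bertinian good point for $g$. Applying \cref{lem:Bertinian} to the irreducible polynomial $g$ over the algebraically closed field $\KK$ (which inherits characteristic at least $d(d-1)+1$ from $\FF_q$) yields $m = 2^{d-1} - 1$ nonzero polynomials $Q_1, \ldots, Q_m \in \KK[\mathbf{z}]$ of degree at most $2d - 1$ whose zero loci together cover every Bertinian bad point for $g$ in $\KK^n$. By \cref{fact:extension}, $H$ is still an HSG of density $1 - \delta$ for polynomials of degree at most $2d-1$ over the extension $\KK$, so each event $Q_i(\mathbf{b}) = 0$ has probability at most $\delta$, and a union bound gives probability at most $m\delta$ that some $Q_i$ vanishes at $\mathbf{b}$. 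In the complementary event, $g(b_1 x, \ldots, b_n x, y)$, and hence $F - t = c \cdot g(b_1 x, \ldots, b_n x, y)$, is irreducible over $\KK$, and so $F$ is indecomposable over $\FF_q$.

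A final union bound combining the two failure events gives total probability at most $\delta + (2^{d-1} - 1)\delta = 2^{d-1}\delta$, matching the statement. The main point that enables the argument is that the Bertinian witnesses $Q_i$ from \cref{lem:Bertinian} live over the transcendental extension $\KK = \overline{\FF_q(t)}$ rather than over $\FF_q$; \cref{fact:extension} makes this harmless. The reason we can afford a union bound over $2^{d-1}$ such polynomials is precisely that each has degree only $O(d)$ instead of $O(d^2)$, so the factor of $2^{d-1}$ loss in error is absorbed by demanding $q \gtrsim d 2^d/\eps$, which contributes only an additive $O(d + \log(1/\eps))$ to $\log q$ and hence is essentially free in the seed length.
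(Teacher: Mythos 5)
Your proposal is correct and follows essentially the same route as the paper's proof: use \cref{cor:goods} (with the degree-$d$ polynomial $B$ hit by $H$) to reduce to a Hypothesis~(H) polynomial $g$ with $s_{\mathbf{a}}(f)-t=c\cdot g$ irreducible over $\overline{\FF_q(t)}$, then apply \cref{lem:Bertinian} together with \cref{fact:extension} and a union bound over the $2^{d-1}-1$ degree-$(2d-1)$ polynomials $Q_i$, and finally translate irreducibility of $F-t$ back to indecomposability of $F$ via \cref{lem:indec2irred} and \cref{lem:base-change}. The only cosmetic difference is that you transfer irreducibility of $f-t$ through the automorphism $s_{\mathbf{a}}$, whereas the paper first notes $s_{\mathbf{a}}(f)$ is indecomposable and then applies \cref{lem:base-change} and \cref{lem:indec2irred}; the two orderings are equivalent.
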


\begin{proof}
Recall that $s_{\mathbf{a}}$ is the $\FF_q$-linear automorphism of $\FF_q[\mathbf{x}, y]$ that fixes $y$ and sends $x_i$ to $x_i+a_i y$.
As $f$ is indecomposable over $\FF_q$, so is $s_{\mathbf{a}}(f)$.
By \cref{lem:base-change}, $s_{\mathbf{a}}(f)$ is also indecomposable over $\overline{\FF}_q$.
So $s_{\mathbf{a}}(f)-t$ is irreducible over $\overline{\FF_q(t)}$ by \cref{lem:indec2irred}.

By \cref{cor:goods}, there exists a nonzero polynomial $B\in \FF_q[\mathbf{x}]$ of degree at most $d$ such that if $B(\mathbf{a})\neq 0$, then 
\begin{equation}\label{eq:cdotg}
s_{\mathbf{a}}(f)-t=c\cdot g
\end{equation}
where $c\in\FF_q^\times$ and $g\in\FF_q(t)[\mathbf{x}, y]\subseteq \overline{\FF_q(t)}[\mathbf{x}, y]$ is a degree-$d$ polynomial satisfying Hypothesis~(H). By the HSG property of $H$, the event $B(\mathbf{a})\neq 0$ happens with probability at least $1-\delta$. Condition on this event, so that \eqref{eq:cdotg} holds.
As $s_{\mathbf{a}}(f)-t$ is irreducible over $\overline{\FF_q(t)}$, so is $g$.

Let $m=2^{d-1}-1$.
By \cref{lem:Bertinian}, there exist nonzero polynomials $Q_1,\dots,Q_m\in \overline{\FF_q(t)}[z_1,\dots,z_n]$ of degree at most $2d-1$ such that the union of the zero loci of these polynomials contains all $\mathbf{b}^*=(b_1^*,\dots,b_n^*)\in\overline{\FF}_q^n$ for which
$g(b_1^* x, \dots, b_n^* x, y)$ is reducible over $\overline{\FF_q(t)}$.
By \cref{fact:extension}, $H$ is an HSG with density $1-\delta$ for polynomials of degree at most $2d-1$ over $\overline{\FF_q(t)}$.\footnote{Note that we are applying \cref{fact:extension} to the infinite extension $\overline{\FF_q(t)}/\FF_q$. In principle, it should be possible to make the argument finitary by making some adaptations, such as considering specific values of $t$. However, this may increase the complexity of the proof.}
Therefore, for each $i\in [m]$, the probability that $Q_i(\mathbf{b})=0$ is at most $\delta$.
Condition on the event $Q_1(\mathbf{b}),\dots,Q_m(\mathbf{b})\neq 0$. Then $g(b_1 x, \dots, b_n x,y)$ is irreducible over $\overline{\FF_q(t)}$.
On the other hand, note that
\[
c\cdot g(b_1 x, \dots, b_n x,y)
\stackrel{\eqref{eq:cdotg}}{=}(s_{\mathbf{a}}(f))(b_1 x,\dots,b_n x,y)-t
=f(b_1 x+a_1 y,\dots,b_n x+a_n y,y)-t=F-t
\]
where the second step uses the definition
$s_{\mathbf{a}}(f)=f(x_1+a_1 y,\dots,x_n+a_n y, y)\in\FF_q[\mathbf{x},y]$. So $F-t$ is irreducible over $\overline{\FF_q(t)}$.
By \cref{lem:indec2irred}, $F$ is indecomposable over $\overline{\FF}_q$. So it is indecomposable over $\FF_q$.

The indecomposability of $F$ over $\FF_q$ relies on the conditions $B(\mathbf{a})\neq 0$ and $Q_1(\mathbf{b}),\dots,Q_m(\mathbf{b})\neq 0$. By the union bound, these conditions are simultaneously satisfied with probability at least $1-\delta-m\delta=1-2^{d-1}\delta$, which completes the proof.
\end{proof}

Now we are ready to prove \cref{thm:main}.

\begin{thm}[\cref{thm:main} restated]\label{thm:main2}
There exists an absolute constant $C>0$ such that for $\eps>0$ and $q\geq C(d2^d/\eps+d^4/\eps^2)$ with $\mathrm{char}(\FF_q)\geq d(d-1)+1$,
$G$ as in \cref{construction:prg} is a PRG for $(n+1)$-variate polynomials of degree at most $d$ over $\mathbb{F}_q$ with error $\eps$ and seed length $O(d\log n+\log q)$.
\end{thm}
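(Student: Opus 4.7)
The plan is to follow Bogdanov's paradigm and prove the theorem by strong induction on $d=\deg(f)$, with the base case $d\le 0$ (constant polynomials) being trivial. For the inductive step, I split on whether $f$ is indecomposable over $\FF_q$. If $f$ is decomposable, write $f=g(h)$ with $\deg(g)\ge 2$, so $\deg(h)\le d/2<d$. The inductive hypothesis gives that $h(G(\mathbf{U}_S))$ is within the target error of $h(\mathbf{U}_{\FF_q^{n+1}})$; applying the deterministic function $g$ to both distributions cannot increase statistical distance, so the same bound holds for $f=g(h)$. This reduces the analysis to the indecomposable case.

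For the indecomposable case, I plan to apply \cref{lem:equidistributed} twice and use \cref{lem:reduction} to transfer indecomposability from $f$ to the bivariate restriction. Since $f\in\FF_q[\mathbf{x},y]$ is indecomposable, \cref{lem:equidistributed} says $f(\mathbf{U}_{\FF_q^{n+1}})$ is within $Cd^2/\sqrt{q}$ of $\mathbf{U}_{\FF_q}$. Setting $\mathbf{a}=H(r)$, $\mathbf{b}=H(s)$, and $F(x,y)=f(b_1x+a_1y,\dots,b_nx+a_ny,y)\in\FF_q[x,y]$, \cref{lem:reduction} guarantees that $F$ is indecomposable over $\FF_q$ except with probability at most $2^{d-1}\delta$ over $(r,s)$. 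On this good event, $F$ is a bivariate indecomposable polynomial of degree $\le d$, so \cref{lem:equidistributed} applied to $F$ yields that $F(\mathbf{U}_{\FF_q^2})$ is within $Cd^2/\sqrt{q}$ of $\mathbf{U}_{\FF_q}$. Since $u,v$ in \cref{construction:prg} are uniform and independent in $\FF_q$ and $f(G(r,s,u,v))=F(u,v)$, averaging over $(r,s)$ and using the trivial distance bound of $1$ in the bad event shows $f(G(\mathbf{U}_S))$ is within $Cd^2/\sqrt{q}+2^{d-1}\delta$ of $\mathbf{U}_{\FF_q}$. The triangle inequality then gives total error at most $2Cd^2/\sqrt{q}+2^{d-1}\delta$ between $f(G(\mathbf{U}_S))$ and $f(\mathbf{U}_{\FF_q^{n+1}})$.

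Plugging in $\delta=C_0(2d-1)/q$ from \cref{construction:prg}, the error is $O(d^2/\sqrt{q}+d\cdot 2^d/q)$, which is at most $\eps$ whenever $q\ge C(d2^d/\eps+d^4/\eps^2)$ for a sufficiently large constant $C$. The seed length is $\log|S|=2\log|T|+2\log q=O(d\log n+\log(1/\delta))+2\log q=O(d\log n+\log q)$. The main technical obstacle---showing that the bivariate restriction inherits indecomposability using only HSGs for polynomials of degree $O(d)$ rather than $O(d^2)$---has already been handled by \cref{lem:reduction} via Lecerf's framework and the union-bound trick over $\{0,1\}$-vectors. What remains is pure bookkeeping of statistical distances and the union bound, along with the mild observation that the error bound in the indecomposable case is monotone in $d$, so the same choice of $q$ suffices at every level of the induction.
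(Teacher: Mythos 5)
Your proposal is correct and follows essentially the same route as the paper: reduce to the indecomposable case via the data-processing inequality, then combine \cref{lem:reduction} with two applications of \cref{lem:equidistributed} and a triangle inequality, with identical error accounting $2^{d-1}\delta+O(d^2/\sqrt{q})$ and the same seed-length computation. The only cosmetic difference is that you handle decomposable $f$ by induction on the degree, whereas the paper simply writes $f=g(h)$ with $h$ indecomposable in one step; these are equivalent.
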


\begin{proof} 
Let $f\in\FF_q[\mathbf{x}, y]$ be a polynomial of degree at most $d$.
We want to prove that $f(G(\mathbf{U}_S))$ and $f(\mathbf{U}_{\FF_q^{n+1}})$ are $\epsilon$-close (in statistical distance).
We may assume that $f$ is a non-constant polynomial, i.e., $\deg(f)\geq 1$, since the claim is trivial otherwise.

Our next step is the same as in \cite{DV22}: $f$ can always be written in the form $f=g(h)$, where $g\in\FF_q[z]$ is a univariate polynomial and $h\in\FF_q[\mathbf{x},y]$ is indecomposable over $\FF_q$. Let $D=h(G(\mathbf{U}_S))$ and $D'=h(\mathbf{U}_{\FF_q^{n+1}})$. Then $f(G(\mathbf{U}_S))=g(D)$ and $f(\mathbf{U}_{\FF_q^{n+1}})=g(D')$. If $D$ and $D'$ are $\eps$-close, then $g(D)$ and $g(D')$ are also $\eps$-close. Thus, by replacing $f$ with $h$, we may assume that $f$ is indecomposable over $\FF_q$.

Let $r,s,\mathbf{a},\mathbf{b}$ and $F$ be as in \cref{lem:reduction}. Then by \cref{lem:reduction}, the probability that $F$ is decomposable over $\FF_q$ over random $r$ and $s$ is at most $2^{d-1}\delta=C_0 2^{d-1}(2d-1)/q$, where $C_0$ is as in \cref{construction:prg}.
Fix $r$ and $s$ such that $F$ is indecomposable over $\FF_q$.
Then $f(G(r,s,u,v))=F(u,v)$ by definition.
Applying \cref{lem:equidistributed} to $F$ shows that, for such fixed $r$ and $s$, the distribution of $F(u,v)$, i.e., $f(G(r,s,u,v))$, over random $u,v\in\FF_q$ is $\eps'$-close to $\mathbf{U}_{\FF_q}$, where $\eps'=C_1 d^2/\sqrt{q}$ and $C_1>0$ is an absolute constant. 
It follows that the statistical distance between $f(G(\mathbf{U}_S))$ and $\mathbf{U}_{\FF_q}$ is at most $2^{d-1}\delta+\eps'$.

On the other hand, as $f$ is also indecompsable over $\FF_q$, applying \cref{lem:equidistributed} to $f$ shows that $f(\mathbf{U}_{\FF_q^{n+1}})$ is $\eps'$-close to $\mathbf{U}_{\FF_q}$.
Therefore, the statistical distance between $f(G(\mathbf{U}_S))$ and $f(\mathbf{U}_{\FF_q^{n+1}})$ is at most
\begin{equation}\label{eq:distance}
(2^{d-1}\delta+\eps')+\eps'=2^{d-1}\delta+2\eps'=C_0 2^{d-1}(2d-1)/q+2C_1 d^2/\sqrt{q}    
\end{equation}
which is bounded by $\eps$ provided that $q\geq C(d2^d/\eps+d^4/\eps^2)$ and $C>0$ is a large enough absolute constant. 
The seed length of $G$ is 
\[
2\log|T|+2\log q=O(d\log n+\log(1/\delta)+\log q)=O(d\log n+\log q)\]
as $\delta=C_0(2d-1)/q$.
\end{proof}

We conclude this section by proving \cref{thm:prime-degree-intro}, which states that the required field size can be improved to $O(d^4/\eps^2)$ if we only want to fool polynomials of prime degree.

\begin{thm}[{\cref{thm:prime-degree-intro} restated}]\label{thm:prime-degree}
There exists an absolute constant $C>0$ such that for $\eps>0$ and $q\geq C(d^4/\eps^2)$ with $\mathrm{char}(\FF_q)\geq d(d-1)+1$,
$G$ as in \cref{construction:prg} is a PRG for $(n+1)$-variate polynomials of \emph{prime} degree
up to $d$ with error $\eps$ and seed length $O(d\log n+\log q)$.
\end{thm}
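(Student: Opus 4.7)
The plan is to mirror the proof of \cref{thm:main2} exactly, replacing only the reduction step of \cref{lem:reduction} with a stronger version that exploits primality: for $h\in\FF_q[\mathbf{x},y]$ indecomposable of prime degree $p\leq d$, I would show that $F=h(b_1 x+a_1 y,\ldots,b_n x+a_n y,y)$ is indecomposable over $\FF_q$ with probability at least $1-2\delta$ over random $(r,s)\in T\times T$. The mechanism, foreshadowed in the final paragraph of \cref{sec:lecerf}, is that $N^*(p, 2p-1, \FF_q)\leq 1$ when $p$ is prime, so in the proof of \cref{lem:Bertinian} only one witness polynomial is needed instead of $2^{p-1}-1$.

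To justify the single-polynomial bound, let $g$ be the degree-$p$ polynomial from \cref{cor:goods} associated with $h$, and suppose $\mathbf{a}$ is a Bertinian bad point for $g$. Then $g_{\mathbf{a}}-t$ is reducible over $\overline{\FF_q(t)}$, so by \cref{lem:indec2irred} the polynomial $g_{\mathbf{a}}$ is decomposable over $\overline{\FF_q}$. Writing $g_{\mathbf{a}}=\varphi(\psi)$ with $\deg\varphi\geq 2$, the primality of $p=\deg g_{\mathbf{a}}$ forces $\deg\varphi=p$ and $\deg\psi=1$, so $g_{\mathbf{a}}-t$ splits into $p$ linear factors over $\overline{\FF_q(t)}$. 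By \cref{thm:lecerf} this puts $\delta_{\{j\}}$ in the solution space of $D_{\mathbf{a},\sigma}$ for every $j\in[p]$, and in particular for $j=1$. Following the construction in \cref{lem:Bertinian}, I would fix a row of the matrix $M$ of $D_{\mathbf{z},\sigma}$ whose inner product with $\delta_{\{1\}}$ is nonzero and take $Q_1$ to be its first-column entry; \cref{lem:degree-bound} then gives $\deg Q_1\leq 2d-1$, and $Q_1(\mathbf{a})=0$ at every Bertinian bad point. A union bound over the events $B(\mathbf{a})\neq 0$ (from \cref{cor:goods}) and $Q_1(\mathbf{b})\neq 0$ yields the $1-2\delta$ bound.

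The assembly into \cref{thm:prime-degree} is then a carbon copy of the proof of \cref{thm:main2}. Given $f$ of prime degree $p\leq d$, I would write $f=g(h)$ with $h$ indecomposable; since $\deg h$ divides $p$, either $\deg h=p$ (and the strengthened lemma above applies) or $\deg h=1$, in which case $F$ is linear and the HSG property applied to the nonzero linear polynomial $c+\sum_i c_i z_i$ already makes $F$ non-constant (hence indecomposable) with probability at least $1-\delta$. Applying \cref{lem:equidistributed} to both $F$ and $h$ then bounds the total statistical distance by
\[
2\delta+2\eps'=O(d/q)+O(d^2/\sqrt{q}),
\]
which is at most $\eps$ whenever $q\geq C d^4/\eps^2$, and the seed length analysis is unchanged at $O(d\log n+\log q)$. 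The main subtlety I expect is the collapse argument above --- verifying that primality forces $\delta_{\{1\}}$ specifically (rather than merely some $\delta_S$ with $1\in S$) into the solution space at every Bertinian bad point --- but this is exactly what the splitting of $g_{\mathbf{a}}-t$ into $p$ linear factors supplies.
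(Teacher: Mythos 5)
Your proposal is correct and follows essentially the same route as the paper: reduce to an indecomposable $h$ whose degree is either $1$ (trivial) or the prime itself, observe that at any Bertinian bad point primality forces a complete splitting into linear factors so that only the single pattern $\delta_{\{1\}}$ (equivalently $N^*(d,2d-1,\FF)\leq 1$) must be ruled out, and thereby replace the $2^{d-1}\delta$ error in \cref{lem:reduction} by $2\delta$, giving the $q\geq Cd^4/\eps^2$ bound. The only blemish is notational: since the polynomial $g$ from \cref{cor:goods} already contains $t$, the bad-point condition is that $g_{\mathbf{a}}$ itself (not ``$g_{\mathbf{a}}-t$'') is reducible over $\overline{\FF_q(t)}$, and the decomposability argument should be applied to its $t$-free part, exactly as in the paper's substitution $t\mapsto -c^{-1}t$.
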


\begin{proof}[Proof Sketch]
Let $f\in\FF_q[\mathbf{x}, y]$ be a polynomial whose degree $d'$ is prime and at most $d$.
We want to prove that $f(G(\mathbf{U}_S))$ and $f(\mathbf{U}_{\FF_q^{n+1}})$ are $\epsilon$-close (in statistical distance).
Suppose $f$ is decomposable over $\FF_q$. Then $f=g(h)$ for some polynomials $g,h$ over $\FF_q$ where $\deg(g)\geq 2$, and as $d'=\deg(f)$ is prime, we must have $\deg(g)=d'$ and $\deg(h)=1$. In this case, the theorem follows by replacing $f$ with $h$, which has degree one, and applying \cref{thm:main2}.
So we may assume that $f$ is indecomposable over $\FF_q$.

The rest of the proof follows that of \cref{thm:main2}, except that we could bound the probability that $F$ is decomposable over $\FF_q$ by $2\delta$, rather than by $2^{d-1}\delta$, using the following observation:

In the application of \cref{lem:Bertinian}, the polynomial has the special form $g^*=f^*+c t$, where $f^*\in\FF_q[\mathbf{x},y]$, $c\in \FF_q^\times$, and $\deg(f^*)=d'$. By making the substitution $t\mapsto -c^{-1} t$, we may assume $c=-1$.
Consider any $\mathbf{a}\in\overline{\FF}_q^n$
such that $g^*_{\mathbf{a}}=g^*(a_1 x,\dots, a_n x, y)$ is reducible over $\overline{\FF_q(t)}^n$.
We claim that $g^*_{\mathbf{a}}$ factorizes into linear factors over $\overline{\FF_q(t)}$.
To see this, note that $f^*_{\mathbf{a}}=f^*(a_1 x,\dots, a_n x, y)$ is a decomposable polynomial over $\overline{\FF}_q$ of degree $d'$ by \cref{lem:indec2irred} and the fact that $g^*_{\mathbf{a}}=f^*_{\mathbf{a}}-t$ is reducible over $\overline{\FF_q(t)}$. So we may write $f^*_{\mathbf{a}}=\alpha(\beta)$ where $\alpha\in \overline{\FF}_q[z]$, $\beta\in \overline{\FF}_q[\mathbf{x},y]$, and $\deg(\alpha)>1$.
As $d'$ is prime, we must have $\deg(\alpha)=d'$ and $\deg(\beta)=1$. As $\alpha$ is univariate, $\alpha-t$ factorizes into linear factors $\alpha_1,\dots,\alpha_{d'}$ over $\overline{\FF_q(t)}$.
So $g^*_{\mathbf{a}}=\alpha(\beta)-t=(\alpha-t)(\beta)$ factorizes into the linear factors $\alpha_1(\beta),\dots,\alpha_{d'}(\beta)$
over $\overline{\FF_q(t)}$.

This observation shows that there is only one bad factorization pattern to rule out, namely, the complete factorization into linear factors.
This allows us to save a factor of $2^{d-1}-1$ and reduce the error probability in \cref{lem:reduction} from $(2^d-1)\delta+\delta$ to $\delta+\delta=2\delta$.
The bound on the statistical distance between $f(G(\mathbf{U}_S))$ and $f(\mathbf{U}_{\FF_q^{n+1}})$ in \eqref{eq:distance} now becomes $2\delta+2\epsilon'=2C_0 (2d-1)/q+2C_1 d^2/\sqrt{q}$, which is bounded by $\eps$ provided that $q\geq C(d^4/\eps^2)$ and $C>0$ is a large enough absolute constant.
\end{proof}

\section{Open Problems}
\label{sec:open}

We conclude with some open problems. The most obvious one is reducing the required field size in our construction. Using Bogdanov's \cite{Bogdanov05} paradigm, it seems necessary for the field to be of size at least polynomial in $d$, since this argument relies on the Weil bound (and indeed, as mentioned in  \cref{sec:intro}, the seed lengths of the known constructions over small fields like $\mathbb{F}_2$ are worse). Still, one could hope to obtain seed length $O(d \log n)$ with $q$ being polynomial in $d$, and not exponential in $d$. In our construction, $q$ is exponential in $d$ due to the need to apply a union bound over all possible vectors in $\{0,1\}^d$ characterizing the factorization pattern of $f_{\mathbf{a}}$. It could very well be that there is a more clever argument that rules out multiple vectors at once. We also mention again  \cref{question:N}. As explained in \cref{sec:lecerf}, improved upper bounds on the quantity $N^*(d,D,\mathbb{F})$ in that question would improve the field size required by our construction.

A related open problem is removing the requirement that the characteristic of $\mathbb{F}_q$ is at least $d(d-1)+1$. This requirement comes from using Lecerf's \cite{Lec07} arguments (dating back to Gao \cite{Gao2003} and Ruppert \cite{Ruppert86, Ruppert99}).

Finally, low-degree polynomials form a natural ``weak'' class of polynomials. However, rather than assuming bounds on the degree of polynomials, one can also consider other weak classes of polynomials, where the restriction comes from bounding their algebraic circuit complexity. This forms another interesting avenue for generalizing the results on PRGs for low-degree polynomials. As an analogy, in the context of Boolean computation, the problem of constructing explicit PRGs for weak computational classes (such as bounded-depth circuits or read-once oblivious branching programs) is well studied (see \cite{Vadhan12}). For algebraic computational models, however, much less is known. Most of the research in this area has focused on constructing \emph{hitting sets} of limited models of algebraic circuits (see \cite{SY10, Saxena09, Saxena14} for some surveys on this topic), due to the relation to the famous Polynomial Identity Testing Problem. To the best of our knowledge, there is no known explicit construction of PRGs for any natural class of algebraic computation. A concrete and intriguing open problem is to explicitly construct PRGs for the class of sparse polynomials, for which, as described in the references above, there are many known explicit constructions of hitting sets.

\section*{Acknowledgments} We thank Jesse Goodman and Pooya Hatami for helpful discussions. Part of this work was carried out while the first two authors were visiting the Simons Institute for the Theory of Computing at UC Berkeley. We thank the institute for their support and hospitality.

\bibliographystyle{alpha}
\bibliography{refs}

\newcommand{\etalchar}[1]{$^{#1}$}
\begin{thebibliography}{AGHP92}

\bibitem[ABK08]{ABK08}
Noga Alon, Ido Ben{-}Eliezer, and Michael Krivelevich.
\newblock Small sample spaces cannot fool low degree polynomials.
\newblock In {\em Proceedings of the 12th International Workshop on
  Randomization and Computation ({RANDOM} 2008)}, volume 5171 of {\em Lecture
  Notes in Computer Science}, pages 266--275. Springer, 2008.

\bibitem[AGHP92]{AGHP92}
Noga Alon, Oded Goldreich, Johan H{\aa}stad, and Ren{\'{e}} Peralta.
\newblock Simple construction of almost k-wise independent random variables.
\newblock {\em Random Structures \& Algorithms}, 3(3):289--304, 1992.

\bibitem[BDN09]{BDN09}
Arnaud Bodin, Pierre Debes, and Salah Najib.
\newblock Indecomposable polynomials and their spectrum.
\newblock {\em Acta Arithmetica}, 139(1):79--100, 2009.

\bibitem[Bog05]{Bogdanov05}
Andrej Bogdanov.
\newblock Pseudorandom generators for low degree polynomials.
\newblock In {\em Proceedings of the 37th Annual {ACM} Symposium on Theory of
  Computing, Baltimore, MD, USA, May 22-24, 2005}, pages 21--30. {ACM}, 2005.

\bibitem[BT13]{BT13}
Avraham Ben{-}Aroya and Amnon Ta{-}Shma.
\newblock Constructing small-bias sets from algebraic-geometric codes.
\newblock {\em Theory of Computing}, 9:253--272, 2013.

\bibitem[BV10]{BV10}
Andrej Bogdanov and Emanuele Viola.
\newblock Pseudorandom bits for polynomials.
\newblock {\em SIAM Journal on Computing}, 39(6):2464--2486, 2010.

\bibitem[CLO07]{CLO07}
David~A.\ Cox, John~B.\ Little, and Donal O'Shea.
\newblock {\em Ideals, Varieties and Algorithms}.
\newblock Undergraduate Texts in Mathematics. Springer, 2007.

\bibitem[CN10]{CN10}
Guillaume Cheze and Salah Najib.
\newblock Indecomposability of polynomials via {J}acobian matrix.
\newblock {\em Journal of Algebra}, 324(1):1--11, 2010.

\bibitem[CT13]{CT13}
Gil Cohen and Amnon Ta{-}Shma.
\newblock Pseudorandom generators for low degree polynomials from algebraic
  geometry codes.
\newblock {\em Electronic Colloquium on Computational Complexity}, {TR13-155},
  2013.

\bibitem[DV22]{DV22}
Harm Derksen and Emanuele Viola.
\newblock Fooling polynomials using invariant theory.
\newblock In {\em 63rd {IEEE} Annual Symposium on Foundations of Computer
  Science, {FOCS} 2022, Denver, CO, USA, October 31 - November 3, 2022}, pages
  399--406. {IEEE}, 2022.

\bibitem[EGL{\etalchar{+}}98]{EGLNV98}
Guy Even, Oded Goldreich, Michael Luby, Noam Nisan, and Boban Velickovic.
\newblock Efficient approximation of product distributions.
\newblock {\em Random Structures \& Algorithms}, 13(1):1--16, 1998.

\bibitem[Eis95]{Eis95}
David Eisenbud.
\newblock {\em Commutative Algebra: With a View Toward Algebraic Geometry}.
\newblock Springer Science \& Business Media, 1995.

\bibitem[Gao03]{Gao2003}
Shuhong Gao.
\newblock Factoring multivariate polynomials via partial differential
  equations.
\newblock {\em Mathematics of Computation}, 72(242):801--822, 2003.

\bibitem[GX14]{GX14}
Venkatesan Guruswami and Chaoping Xing.
\newblock Hitting sets for low-degree polynomials with optimal density.
\newblock In {\em Proceedings of the {IEEE} 29th Conference on Computational
  Complexity, {CCC} 2014}, pages 161--168. {IEEE} Computer Society, 2014.

\bibitem[HH23]{HH23}
Pooya Hatami and William Hoza.
\newblock Theory of unconditional pseudorandom generators.
\newblock {\em Electronic Colloquium on Computational Complexity}, {TR23-019},
  2023.

\bibitem[Kal95]{Kaltofen95}
Erich~L. Kaltofen.
\newblock Effective {N}oether irreducibility forms and applications.
\newblock {\em Journal of Computer and System Sciences}, 50(2):274--295, 1995.

\bibitem[KS01]{KS01}
Adam~R. Klivans and Daniel~A. Spielman.
\newblock Randomness efficient identity testing of multivariate polynomials.
\newblock In {\em Proceedings on 33rd Annual {ACM} Symposium on Theory of
  Computing (STOC 2001)}, pages 216--223. {ACM}, 2001.

\bibitem[Lec06]{Lec06}
Gr{\'e}goire Lecerf.
\newblock Sharp precision in {Hensel} lifting for bivariate polynomial
  factorization.
\newblock {\em Mathematics of Computation}, 75(254):921--933, 2006.

\bibitem[Lec07]{Lec07}
Gr{\'e}goire Lecerf.
\newblock Improved dense multivariate polynomial factorization algorithms.
\newblock {\em Journal of Symbolic Computation}, 42(4):477--494, 2007.

\bibitem[Lov09]{Lovett09}
Shachar Lovett.
\newblock Unconditional pseudorandom generators for low degree polynomials.
\newblock {\em Theory of Computing}, 5(1):69--82, 2009.

\bibitem[Lu12]{Lu12}
Chi{-}Jen Lu.
\newblock Hitting set generators for sparse polynomials over any finite fields.
\newblock In {\em Proceedings of the 27th Conference on Computational
  Complexity, {CCC} 2012}, pages 280--286. {IEEE} Computer Society, 2012.

\bibitem[NN93]{NN93}
Joseph Naor and Moni Naor.
\newblock Small-bias probability spaces: Efficient constructions and
  applications.
\newblock {\em SIAM Journal on Computing}, 22(4):838--856, 1993.

\bibitem[Raz87]{Razborov87}
Alexander~A. Razborov.
\newblock Lower bounds on the size of bounded depth circuits over a complete
  basis with logical addition.
\newblock {\em Matematicheskie Zametki}, 41:598--607, 1987.
\newblock English translation in Mathematical Notes of the Academy of Sci. of
  the USSR, 41(4):333-338, 1987.

\bibitem[Rup86]{Ruppert86}
Wolfgang Ruppert.
\newblock Reduzibilit\"at ebener kurven.
\newblock {\em Journal f\"ur die reine und angewandte Mathematik},
  1986(369):167--191, 1986.

\bibitem[Rup99]{Ruppert99}
Wolfgang~M. Ruppert.
\newblock Reducibility of polynomials $f(x,y)$ modulo $p$.
\newblock {\em Journal of Number Theory}, 77(1):62--70, 1999.

\bibitem[Sax09]{Saxena09}
Nitin Saxena.
\newblock Progress on polynomial identity testing.
\newblock {\em Bulletin of the EATCS}, 99:49--79, 2009.

\bibitem[Sax14]{Saxena14}
Nitin Saxena.
\newblock Progress on polynomial identity testing-{II}.
\newblock {\em Perspectives in Computational Complexity: The Somenath Biswas
  Anniversary Volume}, pages 131--146, 2014.

\bibitem[Sha94]{Sha94}
Igor Shafarevich.
\newblock {\em Basic Algebraic Geometry 1: Varieties in Projective Space}.
\newblock Springer, 1994.

\bibitem[Smo93]{Smolensky93}
Roman Smolensky.
\newblock On representations by low-degree polynomials.
\newblock In {\em 34th Annual Symposium on Foundations of Computer Science},
  pages 130--138. {IEEE} Computer Society, 1993.

\bibitem[SY10]{SY10}
Amir Shpilka and Amir Yehudayoff.
\newblock Arithmetic circuits: {A} survey of recent results and open questions.
\newblock {\em Foundations and Trends in Theoretical Computer Science},
  5(3-4):207--388, 2010.

\bibitem[Ta{-}17]{Ta-Shma17}
Amnon Ta{-}Shma.
\newblock Explicit, almost optimal, epsilon-balanced codes.
\newblock In {\em Proceedings of the 49th Annual {ACM} {SIGACT} Symposium on
  Theory of Computing ({STOC} 2017)}, pages 238--251. {ACM}, 2017.

\bibitem[Vad12]{Vadhan12}
Salil~P. Vadhan.
\newblock Pseudorandomness.
\newblock {\em Foundations and Trends in Theoretical Computer Science},
  7(1-3):1--336, 2012.

\bibitem[Vio09]{Viola09}
Emanuele Viola.
\newblock The sum of \emph{D} small-bias generators fools polynomials of degree
  \emph{D}.
\newblock {\em computational complexity}, 18(2):209--217, 2009.

\bibitem[Wei49]{Wei49}
Andr\'e Weil.
\newblock Numbers of solutions of equations in finite fields.
\newblock {\em Bulletin of the American Mathematical Society}, 55:497--508,
  1949.

\end{thebibliography}

\end{document}